\newtheorem{thm}{Theorem}[section]
\newtheorem{corollary}[thm]{Corollary}
\newtheorem{prop}[thm]{Proposition}
\newtheorem{Conjecture}[thm]{Conjecture}
\newtheorem{lemma}[thm]{Lemma}
\theoremstyle{definition}
\newtheorem{remark}[thm]{Remark}
\newtheorem{example}[thm]{Example}
\newcommand{\R}{\mathbb R}
\newcommand{\Z}{\mathbb Z}
\newcommand{\N}{\mathbb N}
\numberwithin{equation}{section}
\def\XXint#1#2#3{{\setbox0=\hbox{$#1{#2#3}{\int}$}
    \vcenter{\hbox{$#2#3$}}\kern-.5\wd0}}
\date{date}
\begin{document}
\title{Local optimality of cubic lattices for interaction energies}
\author{Laurent B\'{e}termin\thanks{\textit{E-mail address}: betermin@math.ku.dk}\\ \\ QMATH, Department of Mathematical Sciences,  \\ University of Copenhagen, \\ Universitetsparken 5,\\
DK-2100 Copenhagen \O,\\
Denmark }
\date\today
\maketitle

\begin{abstract}
We study the local optimality of Simple Cubic, Body-Centred-Cubic and Face-Centred-Cubic lattices among Bravais lattices of fixed density for some finite energy per point. Following the work of Ennola [Math. Proc. Cambridge, 60:855--875, 1964], we prove that these lattices are critical points of all the energies, we write the second derivatives in a simple way and we investigate the local optimality of these lattices for the theta function and the Lennard-Jones-type energies. In particular, we prove the local minimality of the FCC lattice (resp. BCC lattice) for large enough (resp. small enough) values of its scaling parameter and we also prove the fact that the simple cubic lattice is a saddle point of the energy. Furthermore, we prove the local minimality of the FCC and the BCC lattices at high density (with an optimal explicit bound) and its local maximality at low density in the Lennard-Jones-type potential case. We then show the local minimality of FCC and BCC lattices among all the Bravais lattices (without a density constraint). The largest possible open interval of density's values where the Simple Cubic lattice is a local minimizer is also computed.
\end{abstract}
\noindent
\textbf{AMS Classification:}  Primary 82B20; Secondary 70H14 \\
\textbf{Keywords:} Lattice energy; Theta functions; Cubic lattices; Crystals; Interaction potentials; Lennard-Jones potential; Ground state; Local minimum; Stability. \\

\tableofcontents

\section{Introduction}

The mathematical justification of the emergence of crystal structures in the solid state of matter is something really difficult to get (see the recent review of Blanc and Lewin \cite{Blanc:2015yu}). Even in the zero temperature case, for enough heavy atoms where the interacting energy between the atoms can be restricted to a potential energy (see e.g. \cite{CazorlaBoronatQC} for the difference between classical and quantum solids), it is challenging to prove that the configuration with the lowest energy is a periodic lattice. More precisely, a type of crystallization problem is to prove, for a given function $f:(0,+\infty)\to \R$, that
\begin{equation}\label{thermolim}
\lim_{N\to +\infty}\frac{1}{N}\min_{X_N}\left\{\sum_{i\neq j \atop x_i,x_j\in X_N} f(|x_i-x_j|^2)  \right\}=\sum_{p\in L\backslash\{0\}} f(|p|^2),
\end{equation}
where $X_N=\{x_1,...,x_N\}\subset \R^d$,  for a certain lattice $L\subset \R^d$ that we can call an ``asymptotic minimizer in the sense of the thermodynamic limit". We do not know any result of type \eqref{thermolim} in dimension $d=3$ for classical potentials, but S\"uto proved interesting results at high density for oscillating potentials \cite{Suto2} and the optimality of the Face-Centred-Cubic (FCC, also written $D_3$) lattice in the sense of the thermodynamic limit, but with an additional three-body potential was proved by Theil and Flatley in \cite{TheilFlatley}. In dimension $d=2$, the asymptotic optimality of the triangular lattice $\Z(1,0)\oplus \Z (1/2,\sqrt{3}/2)$ was proved for some perturbations of the hard-sphere potential \cite{Rad2,Rad3,Crystal} and for oscillating potentials \cite{Suto1,Suto2}.\\

Another approach is to consider the minimization problem of
$$
L\mapsto E_f[L]:= \sum_{p\in L\backslash\{0\}} f(|p|^2),
$$
among periodic lattices $L\subset \R^d$. Hence, assuming periodicity and a fixed density of points, we look for the lattice with the lowest energy. This gives the best candidate among periodic structures for the associated crystallization problem \eqref{thermolim}. As we recalled in \cite{Beterloc}, this study was originally done by Rankin \cite{Rankin}, Cassels \cite{Cassels}, Ennola \cite{Eno2} and Diananda \cite{Diananda} in dimension $d=2$ for $f_s(r)=r^{-s}$, i.e. for the Epstein zeta function defined by
\begin{equation}\label{zetadef}
 L\mapsto \zeta_L(2s)=\sum_{p\in L\backslash\{0\}} \frac{1}{|p|^{2s}},\quad s>0,
 \end{equation}
and by Montgomery \cite{Mont} for $f_\alpha(r)=e^{-\alpha r}$, i.e. for the theta function defined by
\begin{equation}\label{thetadef}
L\mapsto \theta_L(\alpha)=\sum_{p\in L} e^{-\alpha |p|^2},\quad \alpha>0.
\end{equation}
They proved that the triangular lattice is the unique minimizer of these functions for any $s>0$ and any $\alpha>0$. In \cite{Betermin:2014fy,BetTheta15}, we solved this minimization problem, in dimension $d=2$, for several potentials $f$, and in particular for the Lennard-Jones-type potentials $f_{LJ}(r)=a_2 r^{-x_2}-a_1r^{-x_1}$, with energy
\begin{equation}
E_{f}[L]=a_2\zeta_L(2x_2)-a_1\zeta_L(2x_1),\quad (a_1,a_2)\in (0,+\infty),\quad x_1<x_2.
\end{equation}
More precisely, we proved the optimality and the non-optimality of the triangular lattice with respect to the density. The three-dimensional case was investigated by Ennola for the Epstein zeta function in \cite{Ennola}  where he proved the local optimality of the FCC lattices and the Body-Centred-Cubic (BCC, also written $D_3^*$) lattice, by duality, for $L\mapsto \zeta_L(2s)$, $s>0$, among Bravais lattices of fixed density. We studied this problem in \cite{BeterminPetrache} for the theta function, and we proved some local optimality and non-optimality results for the FCC and BCC lattices, with respect to the parameter $\alpha$. In the case of the Lennard-Jones-type potentials $f_{LJ}$, Born and Misra  \cite{born1940,misra1940} studied the local stability of the FCC, BCC and Simple Cubic lattices ($\Z^3$) with the lowest energy (among their dilated) among Bravais lattices. They proved the instability of $\Z^3$, the stability of the BCC lattice under some conditions on the exponents $(x_1,x_2)$ and the total stability of the FCC lattice.\\

This kind of result is interesting in a physical point of view. Indeed, among the 118 chemical elements of the periodic table, $21$ can have a BCC structure, $26$ can have a FCC structure and three can have a Simple Cubic structure\footnote{But only the Polonium can have this structure in its solid state at ambient temperature.}. Consequently, the knowledge of the stability, in the sense of local minimality, of the cubic structures for some lattice energies can help to understand the emergence of these structures, especially at zero temperature. Actually, the cooling of some materials give some final structures which can be a local minimizer (if the cooling is too fast for example) or a global minimizer of the potential energy. Obviously, these energies $E_f$ do not take into consideration any orbital of the atoms and have to be considered as a purely static toy models. However, BCC and FCC lattices play an important role in the theory of minimization of lattice energies among Bravais lattices, as shown by the following conjecture by Sarnak and Str\"ombergsson.

\begin{Conjecture}[\cite{SarStromb}]
Let $\zeta_L$ and $\theta_L$ be the Epstein zeta function and the theta function defined by \eqref{zetadef} and \eqref{thetadef}, then:
 \begin{enumerate}
 \item for any $\alpha> \pi$ and $s>3/2$,  $D_3$ is the unique minimizer of theta and Epstein zeta functions among Bravais lattices of fixed unit density.
 \item for any $\alpha< \pi$ and $s<3/2$, $D_3^*$ is the unique minimizer of theta and Epstein zeta functions  among Bravais lattices of fixed unit density.
 \end{enumerate}
\end{Conjecture}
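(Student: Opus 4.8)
Since the final statement is a \emph{conjecture}, I can only propose a strategy and indicate where it stalls; the local results of this paper are exactly the ingredients one would feed into such a program, but the global step remains open. The plan is to reduce all four assertions to a single statement about $\theta_L$ on one side of the self-dual point, and then to attack that statement.

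First I would use the Jacobi--Poisson duality. For a unit-density lattice $L\subset\R^3$,
\[
\theta_L(\alpha)=\left(\frac{\pi}{\alpha}\right)^{3/2}\theta_{L^*}\!\left(\frac{\pi^2}{\alpha}\right),
\]
and $L\mapsto L^*$ is a density-preserving involution on unit-density Bravais lattices that interchanges $D_3$ and $D_3^*$. Its only fixed parameter is $\alpha=\pi$, which is precisely why $\pi$ is the threshold. Since the prefactor is independent of $L$, minimizing $\theta_L(\alpha)$ over $L$ at a parameter $\alpha<\pi$ is the same as minimizing $\theta_{L^*}(\pi^2/\alpha)$ over $L^*$ at the parameter $\pi^2/\alpha>\pi$, and the optimizer $D_3^*$ corresponds to $D_3$. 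Hence assertion (2) for $\theta_L$ follows from assertion (1) for $\theta_L$, and it suffices to prove that $D_3$ is the unique minimizer of $\theta_L(\alpha)$ for every $\alpha>\pi$.

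Next I would pass from the theta function to the Epstein zeta function through the Mellin representation
\[
\pi^{-s}\Gamma(s)\zeta_L(2s)=\int_0^{\infty}\big(\theta_L(\pi t)-1\big)\,t^{s-1}\,dt .
\]
A naive pointwise comparison cannot work, because the integral samples $\alpha=\pi t<\pi$, where $D_3^*$ beats $D_3$. The remedy is to symmetrize using the functional equation: splitting at $t=1$ and applying the duality above turns the completed zeta into $\int_1^{\infty}(\theta_L(\pi t)-1)t^{s-1}dt+\int_1^{\infty}(\theta_{L^*}(\pi t)-1)t^{(3/2-s)-1}dt$ plus explicit pole terms, an integral over the range $\alpha\ge\pi$ alone. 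For $s>3/2$ the $\theta_L$-weight $t^{s-1}$ dominates the $\theta_{L^*}$-weight, which is why $D_3$ should win and why the threshold is $s=3/2=d/2$; but one still has to prove, uniformly in $s>3/2$, that the gain of $D_3$ in the dominant term always outweighs its loss in the subdominant $\theta_{L^*}$ term. This inequality is plausible but not automatic.

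The genuine obstacle is the core theta statement. Its local content is exactly what the present paper provides: $D_3$ is a critical point of $\theta_L$ for every $\alpha$, with an explicit second derivative, and one wants its Hessian on the five-dimensional moduli space of unit-density lattices to be positive definite for all $\alpha>\pi$. Upgrading this local minimality to \textbf{global} minimality is where every known method fails in dimension three: there is no Cohn--Elkies linear-programming or interpolation certificate in $d=3$ (in contrast to $d=8,24$), the map $L\mapsto\theta_L(\alpha)$ is not convex on the moduli space, and competing critical points cannot be excluded a priori. Montgomery's two-dimensional argument succeeds because the hexagonal lattice minimizes $\theta_L$ for \emph{every} $\alpha$, so the Mellin integral transfers the result to $\zeta_L$ directly; in three dimensions no single lattice is optimal for all $\alpha$, the optimizer changes at $\alpha=\pi$, and it is exactly this lack of a uniform minimizer that defeats the transfer and keeps the conjecture open.
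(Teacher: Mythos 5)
This statement is the Sarnak--Str\"ombergsson conjecture, quoted by the paper from \cite{SarStromb}; the paper contains no proof of it (its own contribution, Theorem \ref{MainTh1}, is only local evidence), so there is nothing to compare your argument against, and you were right not to claim a proof. The reductions you do state are correct as far as they go: the Jacobi identity $\theta_L(\alpha)=(\pi/\alpha)^{3/2}\theta_{L^*}(\pi^2/\alpha)$ for unit-density lattices is exactly the Poisson summation the paper uses (Proposition \ref{criticality}, Corollary \ref{corlocmintheta}), it does reduce part (2) for $\theta$ to part (1) for $\theta$ with $\alpha=\pi$ the self-dual point, and your symmetrized Mellin representation is the standard Riemann trick, with lattice-independent pole terms $-1/s-1/(3/2-s)$ that are harmless for the minimization. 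Your diagnosis of the obstruction -- no LP/interpolation certificate in $d=3$, no uniform-in-$\alpha$ minimizer to make Montgomery's transfer work -- matches the accepted state of the art.

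Two imprecisions are worth flagging. First, your heuristic ``the threshold is $s=3/2$ because $t^{s-1}$ dominates $t^{(3/2-s)-1}$'' is off: for $t\geq 1$ that pointwise dominance already holds for all $s>3/4$, which is the fixed point of the completed zeta's symmetry $s\mapsto 3/2-s$; unlike $\alpha=\pi$ for theta, the conjectural zeta transition at $s=3/2$ is \emph{not} forced by duality, which maps $(0,3/2)$ to itself rather than interchanging the two regimes of the conjecture. Relatedly, for $s<3/2$ the series $\zeta_L(2s)$ diverges, so part (2) for zeta concerns the meromorphic continuation, and the functional equation \eqref{fcteqzeta} involves a sign change of $\Gamma(3/2-s)$; any reduction of the zeta statements must therefore go entirely through the symmetrized representation, not through a naive duality swap of (1) and (2). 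Second, you describe the conjecture's ``local content'' as provided by this paper, but Proposition \ref{locmintheta} only yields positive-definiteness of the Hessian at $D_3$ for $\alpha>\alpha_1$ with an unspecified, possibly large $\alpha_1$ (and saddle behavior for $\alpha<\alpha_0$); nothing in the paper shows $\alpha_1$ can be pushed down to $\pi$, so even local minimality on the full conjectured range $\alpha>\pi$ remains open.
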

Thus, in a certain sense, the BCC lattice seems to be a good candidate for energy minimization problems with long-range (in the sense of non-integrable at infinity) potentials, while this is the FCC if the potential is integrable at infinity. \\

As in \cite{Beterloc}, where we investigated the local optimality of the triangular and square lattices, we want to understand why the cubic lattices $\Z^3$, $D_3$ and $D_3^*$ are special in a stability point of view. We here focus on finite energies $E_f$ with finite first and second derivatives with respect to lattice parameters (we call the set of functions\footnote{See \eqref{defF} for more details.} $\mathcal{F}$). These $5$ parameters $(u,v,x,y,z)$ (because we fix the density) are chosen following Ennola's method \cite{Ennola}. Actually, several lemmas from this paper will be directly used or adapted to prove our results. In each statement, we will fix a volume $V$ of the primitive unit cell (the inverse of the density) and then $\Z^3$, $D_3$ and $D_3^*$ will correspond to Simple Cubic, FCC and BCC lattices with this volume $V$, to avoid extra non-useful notations\footnote{These notations $\Z^3$, $D_3$ and $D_3^*$ name the shape of the structure.}. The first result we get is the following fact

\begin{prop}[See Prop. \ref{criticality} below]
For any $f\in \mathcal{F}$ and any fixed volume $V$, $\Z^3$, $D_3$ and $D_3^*$ are critical points of $L\mapsto E_f[L]$ among Bravais lattices of fixed volume $V$.
\end{prop}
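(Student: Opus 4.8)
The plan is to realise Bravais lattices of fixed covolume through their Gram matrices and then reduce criticality to a symmetry statement. Writing $L=M\Z^3$ with $M\in GL_3(\R)$, the energy becomes
$$
E_f[L]=\sum_{k\in\Z^3\setminus\{0\}}f\bigl(k^{\mathsf T}Qk\bigr),\qquad Q=M^{\mathsf T}M,
$$
where $Q$ is the associated positive definite symmetric form and the fixed-volume constraint reads $\det Q=V^2$. Since $f\in\mathcal F$ is chosen precisely so that the lattice sum converges together with its first and second derivatives, one may differentiate term by term. Under this parametrization $\Z^3$ corresponds to $Q=c\,\mathrm{Id}$, while $D_3$ and $D_3^*$ correspond to the two other cubic forms, all with the prescribed determinant.

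Criticality among lattices of fixed volume is exactly the Lagrange condition that the gradient $\nabla_Q E_f$ be proportional to $\nabla_Q(\det Q)$ along the constraint surface. A direct computation gives, up to the symmetric-matrix convention,
$$
\frac{\partial E_f}{\partial Q_{ij}}=\sum_{k}f'\bigl(k^{\mathsf T}Qk\bigr)\,k_ik_j,
\qquad
\frac{\partial(\det Q)}{\partial Q_{ij}}=(\det Q)\,(Q^{-1})_{ij}.
$$
The whole matter is therefore to evaluate the first sum at each cubic form and to verify proportionality to $(Q^{-1})_{ij}$.

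The decisive input is the rich point symmetry of the cubic lattices. For $\Z^3$, where $Q=c\,\mathrm{Id}$, the coordinate sign flip $k_j\mapsto-k_j$ is a lattice automorphism preserving $k^{\mathsf T}Qk$ while sending $k_ik_j$ to $-k_ik_j$ for $i\neq j$; hence every off-diagonal entry of $\nabla_Q E_f$ is an odd sum and vanishes, and the coordinate permutations force the three diagonal entries $\sum_k f'(c|k|^2)k_i^2$ to coincide. Thus $\nabla_Q E_f$ is itself a scalar multiple of the identity, while $(Q^{-1})_{ij}=c^{-1}\delta_{ij}$ is scalar as well, so the Lagrange condition holds with an explicit multiplier and $\Z^3$ is critical.

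For $D_3$ and $D_3^*$ the Gram matrix in a reduced basis is no longer a multiple of the identity, so the elementary sign-flip and permutation bookkeeping must be replaced by a more structural statement. Conceptually, $\nabla E_f$ is a cotangent vector at the lattice that must be invariant under the full cubic automorphism group $O_h$, because $E_f$ is isometry-invariant and $O_h$ fixes the lattice; but the five-dimensional space of traceless symmetric (volume-preserving) deformations carries no trivial $O_h$-subrepresentation, decomposing as $E_g\oplus T_{2g}$, so the gradient lies in a zero isotypic component and must vanish. In practice I would instead follow Ennola and work in the five explicit parameters $(u,v,x,y,z)$, computing each partial derivative of $E_f$ at the cubic point and checking directly that it cancels, the cancellation being this symmetry in disguise; $D_3^*$ may alternatively be deduced from $D_3$ by duality. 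The main obstacle is thus not the symmetry principle but the technical groundwork: arranging Ennola's coordinates so that all three cubic lattices appear as symmetric base points, and justifying term-by-term differentiation through the uniform convergence of the differentiated series guaranteed by the defining properties of $\mathcal F$.
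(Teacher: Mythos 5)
Your proposal is correct, but it takes a genuinely different route from the paper. The paper works directly in Ennola's five coordinates $(u,v,x,y,z)$: it computes the five first partial derivatives of $E_f$, evaluates them at $D_3$, and kills them using the explicit automorph identities \eqref{aut1}--\eqref{aut4} together with the exchange symmetry \eqref{exmn}; the case of $\Z^3$ is handled by the elementary sign-flip and permutation symmetries \eqref{scubicsym}, and $D_3^*$ is deduced from $D_3$ by Poisson summation. You instead work on the space of Gram matrices $Q$ with $\det Q=V^2$, write the Lagrange condition $\nabla_Q E_f\propto(\det Q)\,Q^{-1}$, and observe that $\nabla_Q E_f=\sum_k f'(k^{\mathsf T}Qk)\,kk^{\mathsf T}$ is invariant under the automorph action $G\mapsto AGA^{\mathsf T}$; since the point group of all three cubic lattices is the full octahedral group $O_h$, whose action on traceless symmetric tensors decomposes as $E_g\oplus T_{2g}$ with no trivial summand, the constrained gradient must vanish. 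This representation-theoretic argument is sound (after conjugating to an orthonormal frame, an $O_h$-invariant symmetric matrix is scalar, hence $\nabla_Q E_f\propto Q_0^{-1}$, which is exactly criticality under the determinant constraint, and pulls back to vanishing of all five partials in Ennola's parametrization), and it has two advantages: it treats $\Z^3$, $D_3$ and $D_3^*$ uniformly --- in particular BCC needs no duality/Poisson argument --- and it generalizes to any lattice whose point group acts without trivial component on traceless symmetric deformations. What the paper's computational route buys, by contrast, is reusable infrastructure: the explicit first-derivative formulas and the automorph lattice-sum identities of Lemma \ref{automorphs} are precisely what is needed again for the second-order analysis in Proposition \ref{2ndD3}, which your abstract argument does not provide. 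Your closing hedge (falling back on the Ennola-style coordinate computation) is unnecessary: the symmetry argument as you state it already suffices, provided you retain, as you do, the term-by-term differentiation justified by the decay conditions defining $\mathcal{F}$ and the convention factor in $\partial(\det Q)/\partial Q_{ij}$ for symmetric perturbations.
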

This means that $\nabla E_f[u,v,x,y,z]=0$ at points (lattices) $\Z^3$, $D_3$ and $D_3^*$. This result, proved by using the symmetries of each lattice (see Section \ref{secauto}), is obviously not surprising, and we obtained the same for the square and triangular lattices in dimension $2$ in \cite{Beterloc}.\\

Our second result follows from the study of the second order derivatives of $E_f$. As Ennola \cite{Ennola}, we use the symmetries of $D_3$ to write its Hessian matrix in terms of two ternary forms $R$ (associated to the FCC lattice) and $T$. These formulas are given in Proposition \ref{2ndD3} for any $f\in \mathcal{F}$. As Ennola showed that this decomposition is useful to prove the local optimality of $D_3$ for the Epstein zeta function, we do the same in the case of the theta function \eqref{thetadef}. Thus, using recent results by Faulhuber and Steinerberger \cite{Faulhuber:2016aa} about one dimensional theta functions, we get the following theorem.

\begin{thm}[See Prop. \ref{Z3saddleptLJ}, Prop. \ref{locmintheta}, Cor. \ref{corlocmintheta} below]\label{MainTh1}
Let $\theta_L$  be the theta function defined by \eqref{thetadef}. Then:
\begin{enumerate}
\item For any $\alpha>0$ and any $V>0$, $\Z^3$ is a saddle point of $L\mapsto \theta_L(\alpha)$ among Bravais lattices of fixed volume $V$. 
\item For any $V>0$, there exists $\alpha_0$ such that $D_3$ and $D_3^*$ are saddle points of $L\mapsto \theta_L(\alpha)$ respectively for $0<\alpha<\alpha_0$ and $\alpha>1/\alpha_0$ among Bravais lattices of fixed volume $V$.
\item For any $V>0$, there exists $\alpha_1$ such that $D_3$ and $D_3^*$ are local minimizers of $L\mapsto \theta_L(\alpha)$ respectively for $\alpha>\alpha_1$ and $0<\alpha<1/\alpha_1$ among Bravais lattices of fixed volume $V$.
\end{enumerate}
\end{thm}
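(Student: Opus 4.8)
The plan is to reduce all three assertions to a signature computation for the Hessian of $L\mapsto\theta_L(\alpha)$. Criticality is already supplied by Proposition~\ref{criticality}, so the gradient vanishes at each of $\Z^3$, $D_3$, $D_3^*$ and the local behaviour is governed entirely by the sign of the quadratic form given by the $5\times5$ Hessian on the tangent space of volume-$V$ Bravais lattices in the Ennola coordinates $(u,v,x,y,z)$. For the $D_3$ and $D_3^*$ cases I would insert $f_\alpha(r)=e^{-\alpha r}$ into the decomposition of Proposition~\ref{2ndD3}, which expresses the Hessian at $D_3$ through the two ternary forms $R$ (the FCC form) and $T$; positive definiteness then reduces to the positivity of a small number of explicit $\alpha$-dependent scalars built from $R$ and $T$, while a saddle point corresponds to one of these scalars being negative.

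For part (1), I would show indefiniteness at $\Z^3$ for every $\alpha$ by exploiting the hyperoctahedral symmetry of the simple cubic lattice. Under this symmetry the $5$-dimensional space of volume-preserving deformations (traceless symmetric matrices) splits into two irreducibles: the $2$-dimensional space of diagonal tetragonal/orthorhombic deformations and the $3$-dimensional space of shears. Since $\theta_{\Z^3}(\alpha)$ is invariant, Schur's lemma forces the Hessian to act as a single scalar $A(\alpha)$ on the first block and $B(\alpha)$ on the second, so that $\Z^3$ is a saddle iff $A(\alpha)B(\alpha)<0$. Using the Gaussian factorization $e^{-\alpha|p|^2}=\prod_i e^{-\alpha p_i^2}$, both $A(\alpha)$ and $B(\alpha)$ become explicit sums of products of the one-dimensional theta function $\sum_{n\in\Z}e^{-\alpha n^2}$ and its first two derivatives; I expect one of them to be manifestly positive and the other manifestly negative for all $\alpha>0$. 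The $\alpha$-independence of these two signs is exactly what yields the ``for all $\alpha$'' conclusion, and I would not expect this part to require any deep input.

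For parts (2) and (3), I would analyse the scalar coefficient $c(\alpha)$ (a combination of the $R$- and $T$-contributions of Proposition~\ref{2ndD3}) whose sign decides definiteness of the $D_3$-Hessian. Writing $\theta_{D_3}(\alpha)$ through one-dimensional Jacobi theta functions and differentiating in the $R$- and $T$-directions, $c(\alpha)$ becomes a combination of products of these functions with their derivatives, and here the one-dimensional theta inequalities of Faulhuber and Steinerberger \cite{Faulhuber:2016aa} are precisely what controls the sign. I would then evaluate $\lim_{\alpha\to\infty}c(\alpha)$ and $\lim_{\alpha\to0}c(\alpha)$, expecting $c(\alpha)>0$ (positive definiteness, hence local minimality) for large $\alpha$ and $c(\alpha)<0$ (an unstable direction, hence a saddle) for small $\alpha$; this produces the thresholds $\alpha_1$ and $\alpha_0$. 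The $D_3^*$ statements then follow by duality: Poisson summation gives $\theta_L(\alpha)=\pi^{3/2}\alpha^{-3/2}V^{-1}\theta_{L^*}(\pi^2/\alpha)$, and since $(D_3)^*=D_3^*$ this is a positive multiple of a reparametrization of $\theta_{L^*}$. After rescaling back to volume $V$ (which only changes $\alpha$ by a fixed constant), it maps the signature of the Hessian at $D_3$ for parameter $\alpha$ to that at $D_3^*$ for parameter proportional to $1/\alpha$, exchanging the large-$\alpha$ and small-$\alpha$ regimes and accounting for the reciprocal thresholds $1/\alpha_0$ and $1/\alpha_1$.

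The main obstacle lies in the intermediate regime of parts (2)--(3): showing that the signs extracted asymptotically at $0$ and $\infty$ actually persist, so that the ``saddle'' and ``local minimum'' regions are genuine half-lines rather than a more intricate sign pattern. The two limits are comparatively routine, but controlling $c(\alpha)$ across all $\alpha$ amounts to a sharp comparison between derivatives of one-dimensional theta functions, which is exactly where the Faulhuber--Steinerberger inequalities must be invoked carefully. It is plausible that these inequalities only yield the two one-sided thresholds separately, so that one obtains possibly distinct constants $\alpha_0\neq\alpha_1$ rather than a single crossing point; the theorem statement is stated precisely in that cautious form, which matches what I expect the argument to deliver.
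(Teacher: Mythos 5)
Your overall skeleton (criticality from Proposition~\ref{criticality}, Hessian signature on the five Ennola parameters, Gaussian factorization at $\Z^3$, the $R$/$T$ decomposition of Proposition~\ref{2ndD3} at $D_3$, and Poisson-summation duality with reciprocal parameter for $D_3^*$) matches the paper, but there is a genuine gap in your plan for parts (2)--(3). The form $R=m^2+n^2+p^2+mp+np$ has cross terms, so $e^{-\beta R}$ does \emph{not} factor into one-dimensional Gaussians, and the Hessian entries of Proposition~\ref{2ndD3} involve, besides $\sum R e^{-\beta R}$ and $\sum R^2 e^{-\beta R}$, the sum $\sum T e^{-\beta R}$ with $T=mn(m+p)(n+p)$; this last quantity is not obtainable by differentiating $\theta_{D_3}$ (or any product of Jacobi theta functions) in $\alpha$, so the Faulhuber--Steinerberger machinery you invoke here has no purchase. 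The paper's actual key input is the positivity \eqref{sumTR}, proved by Abel summation from Ennola's Lemma~5 (the partial sums $A(t)=\sum_{R\le t}T$ satisfy $A(1)=0$ and $A(t)>0$ for $t\ge 2$), combined with the pointwise bound $4|T|\le R^2$ of \eqref{ineqRT}. With these, the diagonal entries such as $\frac{\beta}{2}\sum[\beta(R^2+12T)-4R]e^{-\beta R}$ and the two $2\times 2$ minors (the Hessian at $D_3$ splits into a $(u,v)$ block, an $(x,y)$ block with $\partial^2_{xy}E\neq 0$, and the $z$ direction --- it is not scalar on blocks) are positive for large $\beta$ and have negative diagonals for small $\beta$. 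Note also that your worry about the intermediate regime is moot: the theorem claims only one-sided thresholds $\alpha_0,\alpha_1$, and the paper's argument is purely asymptotic, making no assertion about a single sign change. Your duality step for $D_3^*$ is correct and is exactly Corollary~\ref{corlocmintheta}.

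For part (1), your Schur-lemma reduction is a legitimate (and slightly slicker) repackaging of Proposition~\ref{2ndZ3}, and the factorization into one-dimensional theta functions is exactly the paper's route; but your expectation that the two block scalars have ``manifest'' signs requiring ``no deep input'' is wrong --- you have the location of the Faulhuber--Steinerberger input inverted. It is part (1), not parts (2)--(3), where \cite{Faulhuber:2016aa} is essential. The negativity of the shear block uses the identity \eqref{FS1}, $s\theta_3'(s)/\theta_3(s)+(1/s)\theta_3'(1/s)/\theta_3(1/s)=-1/2$ (elementary, being the logarithmic derivative of the theta functional equation, but still a specific fact), while the positivity of the $(u,v)$ block is genuinely delicate: in theta terms it reads $\frac{3\beta}{\pi^2}\left(\theta_3''\theta_3-(\theta_3')^2\right)+\frac{1}{\pi}\theta_3\theta_3'>0$, and plain Cauchy--Schwarz, which gives only $\theta_3''\theta_3-(\theta_3')^2\ge 0$, cannot beat the negative term $\theta_3\theta_3'$ (recall $\theta_3'<0$). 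One needs the refined logarithmic convexity $\theta_3''(s)\theta_3(s)-\theta_3'(s)^2>-\theta_3'(s)\theta_3(s)/s$ of \cite[Th.~2.3]{Faulhuber:2016aa}. So your blueprint becomes a proof only after importing Ennola's Lemma~5 and \eqref{ineqRT} into parts (2)--(3) and the refined Faulhuber--Steinerberger inequality into part (1).
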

\begin{remark}
We notice that, due to the scaling formula $\theta_{L}(\alpha)=\theta_{L_1}(V^{2/3}\alpha)$, for any Bravais lattice $L$ with volume $V$ such that $L=V^{1/3}L_1$ where $L_1$ has unit volume, it will be sufficient to prove this result for $V=1$. Furthermore, $\alpha_0$ and $\alpha_1$ depend on $V$.
\end{remark}

In particular, we improve our previous result  \cite[Theorem 1.7.3)]{BeterminPetrache} where we only proved the non-optimality of these lattices for the extremal values of $\alpha$. In \cite{BeterminPetrache}, the local optimality was just proved for a finite number of values for $\alpha$ thanks to a result about the global minimality of these lattices among body-centred-orthorhombic lattices. Our Theorem \ref{MainTh1} supports the conjecture of Sarnak and Str\"ombergsson. Furthermore, the first point of this theorem implies that, for any fixed volume $V$ and any completely monotone function\footnote{A function $f$ is completely monotone if, for any $k\in \N$ and any $r>0$, $f^{(k)}(r)\geq 0$ or, equivalently, if it is the Laplace transform of a positive Borel measure on $\R_+$.} $f\in \mathcal{F}$, $\Z^3$ is a saddle point of $E_f$ among Bravais lattices of fixed volume $V$.\\

As in \cite{Beterloc}, our final results are about the Lennard-Jones-type potentials. First, we find the largest open set of values of $V$ such that the cubic lattices are local minimizers/maximizers or saddle points. Second, we show the local minimality of FCC and BCC lattices at high density.

\begin{thm}[See Prop. \ref{Z3LJ} and Prop. \ref{locoptFCCLJ} below]\label{MainTh2}
Let $f$ be the Lennard-Jones-type potential defined on $(0,+\infty)$ by
\begin{equation}\label{defLJones}
f(r)=\frac{a_2}{r^{x_2}}-\frac{a_1}{r^{x_1}},
\end{equation}
where $(a_1,a_2)\in (0,+\infty)^2$ and $3/2<x_1<x_2$. Then:
\begin{enumerate}
\item If $a_1=2$, $a_2=1$, $x_1=3$, $x_2=6$, there exist $V_1,V_2$ such that, for any $V_1<V<V_2$, $\Z^3$ is a local minimizer of $E_f$ among Bravais lattices of fixed volume $V$, where $V_1\approx 1.200$ and $V_2\approx 1.344$. Furthermore, if $V\not\in [V_1,V_2]$, then $\Z^3$ is a saddle point of $E_f$  among Bravais lattices of fixed volume $V$.
\item Let $G$ and $H$ be two functions defined respectively by \eqref{G} and \eqref{H}. If we have
\begin{equation}
V<\frac{1}{\sqrt{2}} \min\left\{ \left( \frac{a_2G(x_2)}{a_1G(x_1)} \right)^{\frac{3}{2(x_2-x_1)}}, \left( \frac{a_2H(x_2)}{a_1H(x_1)} \right)^{\frac{3}{2(x_2-x_1)}} \right\},
\end{equation}
then $D_3$ and $D_3^*$ are local minimizers of $E_f$ among Bravais lattices of fixed volume $V$. Furthermore, if 
\begin{equation}
V> \frac{1}{\sqrt{2}} \max\left\{ \left( \frac{a_2G(x_2)}{a_1G(x_1)} \right)^{\frac{3}{2(x_2-x_1)}}, \left( \frac{a_2H(x_2)}{a_1H(x_1)} \right)^{\frac{3}{2(x_2-x_1)}} \right\},
\end{equation}
then $D_3$ and $D_3^*$ are local maximizers of $E_f$ among Bravais lattices of fixed volume $V$.
\end{enumerate}
\end{thm}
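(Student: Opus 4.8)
The plan is to reduce the whole statement to the sign of the eigenvalues of the Hessian of $E_f$ at each cubic lattice. By the criticality proposition, $\Z^3$, $D_3$ and $D_3^*$ are critical points of $L\mapsto E_f[L]$ among Bravais lattices of fixed volume $V$, so local minimality (resp. maximality, resp. the saddle property) is exactly positive-definiteness (resp. negative-definiteness, resp. indefiniteness) of the Hessian on the $5$-dimensional space of fixed-volume shape deformations. For a Lennard-Jones-type $f$ as in \eqref{defLJones} one has the exact splitting
\begin{equation*}
E_f[L]=a_2\,\zeta_L(2x_2)-a_1\,\zeta_L(2x_1),
\end{equation*}
with $\zeta_L$ as in \eqref{zetadef}, so $\mathrm{Hess}\,E_f=a_2\,\mathrm{Hess}\,\zeta_L(2x_2)-a_1\,\mathrm{Hess}\,\zeta_L(2x_1)$ is a linear combination of two Epstein-zeta Hessians. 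Everything then follows from understanding, for each fixed exponent $s>3/2$ (the range guaranteeing convergence of the sum and of its first two derivatives, which is why the hypothesis requires $3/2<x_1$), the eigenvalues of the Hessian of $L\mapsto\zeta_L(2s)$ at the lattice in question.

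For $D_3$ (and, by duality, $D_3^*$) I would invoke Proposition \ref{2ndD3}, which uses the automorphism group of the FCC lattice, as set up in Section \ref{secauto}, to split the fixed-volume Hessian into the two ternary-form channels $R$ and $T$. The first thing to verify is that this splitting is \emph{complete}: Schur-type arguments on the action of the automorphism group on the $5$-dimensional deformation space force the Hessian to be constant on each irreducible piece, so there are only two distinct eigenvalue functions of the exponent and no eigendirection escapes the $R$- and $T$-channels. I then identify these two eigenvalue functions with $G(s)$ and $H(s)$ from \eqref{G} and \eqref{H}; each is a convergent lattice sum built from $f'(|p|^2)=-s|p|^{-2s-2}$ and $f''(|p|^2)=s(s+1)|p|^{-2s-4}$ weighted by the quadratic invariants attached to $R$ and $T$.

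The thresholds come out of homogeneity. Writing the volume-$V$ FCC lattice as a dilation $tL_0$ of a fixed reference lattice $L_0$, one has $\zeta_{tL_0}(2s)=t^{-2s}\zeta_{L_0}(2s)$, hence the $R$- and $T$-channel eigenvalues of $\mathrm{Hess}\,E_f$ scale as $V^{-2x_i/3}$ against the fixed numbers $G(x_i)$, $H(x_i)$. Imposing positivity of the $R$-channel eigenvalue,
\begin{equation*}
a_2V^{-2x_2/3}G(x_2)-a_1V^{-2x_1/3}G(x_1)>0,
\end{equation*}
is, once $G(x_1),G(x_2)>0$, equivalent to $V<\left(a_2G(x_2)/(a_1G(x_1))\right)^{3/(2(x_2-x_1))}$, and the analogous computation for the $T$-channel gives the bound with $H$; the $1/\sqrt{2}$ prefactor is the covolume normalization converting the reference lattice used in \eqref{G}--\eqref{H} into the genuine volume-$V$ FCC lattice. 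Local minimality of $D_3$ requires both channels positive, producing the $\min$; local maximality requires both negative, producing the $\max$. The duality between $D_3$ and $D_3^*$ transfers these statements to the BCC case. For $\Z^3$ in point $1$ I would run the same splitting using the cubic symmetry of $\Z^3$, which yields its own finite list of exponent-dependent eigenvalue functions; inserting the explicit data $a_1=2,a_2=1,x_1=3,x_2=6$ and evaluating these (rapidly convergent) sums as functions of $V$, one locates the two sign changes at $V_1\approx1.200$ and $V_2\approx1.344$ between which every eigenvalue is positive, with at least one negative eigenvalue outside $[V_1,V_2]$.

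The main obstacle is twofold. First, I must establish that $G$ and $H$ keep a fixed sign for all $x>3/2$, so that the two threshold inequalities genuinely point the same way and the $\min$/$\max$ dichotomy is exactly local minimality/maximality rather than a one-sided condition; this is where the structure of the forms $R$ and $T$ and a careful monotonicity/sign analysis of the defining lattice sums are needed. Second, for the $\Z^3$ statement the numerics of the eigenvalue sums must be controlled sharply enough to certify that the two crossings are simple and occur precisely at the claimed values $V_1,V_2$, which requires rigorous tail estimates on the $\Z^3$-sums rather than mere floating-point evaluation.
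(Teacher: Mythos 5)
Your route for point 2 is essentially the paper's route, written in more detail than the paper itself: the paper does not re-derive the two-channel decomposition from Proposition \ref{2ndD3} plus a Schur argument, but instead quotes Ennola's Lemma 4 verbatim, which already gives the exact expansion
\begin{equation*}
E_f[L]-E_f[D_3]=\tfrac{1}{12}\left(\tfrac{a_2G(x_2)}{C^{x_2}}-\tfrac{a_1G(x_1)}{C^{x_1}}\right)Q_1+\tfrac{1}{12}\left(\tfrac{a_2H(x_2)}{C^{x_2}}-\tfrac{a_1H(x_1)}{C^{x_1}}\right)Q_2+\Delta,\qquad \Delta=o(\gamma^2),
\end{equation*}
with $Q_1,Q_2$ positive semidefinite forms of ranks $2$ and $3$ whose sum is definite; your homogeneity computation (with $C=(\sqrt{2}V)^{2/3}$, which is exactly your ``covolume'' explanation of the $1/\sqrt2$) then reproduces the thresholds correctly. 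Two caveats. First, what you list as your ``main obstacle'' --- fixed signs of $G$ and $H$ --- is not open and must not be left as an obstacle: it is Ennola's Lemmas 11 and 12 ($G(s)>0$, $H(s)>0$ for all $s>0$), which the paper cites; without it your $\min$/$\max$ dichotomy indeed collapses. Second, the transfer to $D_3^*$ is not bare duality: for $s>3/2$ the dual exponent $3-2s$ in \eqref{fcteqzeta} is negative, so the functional equation is used through the analytic continuation of $\zeta_L$, and your one-line ``duality transfers'' should say so (the paper is admittedly equally terse here).

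For point 1 the paper does no symmetry splitting at all: Proposition \ref{2ndZ3} makes the Hessian block-diagonal, and the proof of Proposition \ref{Z3LJ} simply imposes $\partial^2_{uu}>0$, $\partial^2_{xx}>0$ and positivity of the $(u,v)$-block determinant, locating $V_1,V_2,V_3,V_4$ numerically --- with no rigorous tail bounds, so on that score you demand more than the paper supplies. But your Schur-completeness observation has real bite here, and you should be aware that it \emph{conflicts} with the paper's determinant analysis: at $\Z^3$ the $u$- and $v$-directions span the two-dimensional diagonal-traceless (``$E_g$'') channel, so by your own argument the $(u,v)$-block must be a scalar multiple of the fixed positive-definite Gram matrix of these two tangent vectors, and its determinant can never be negative --- whereas the paper asserts it is negative for $V_2<V<V_4$. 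If you re-evaluate Lemma \ref{2nd} at $\Z^3$ you find $\partial^2_{uu}E_f[\Z^3]=3\cdot 2^{-1/3}C^2\sum_{m,n,p}(p^4-n^2p^2)f''(CI)+3C\sum_{m,n,p}p^2f'(CI)$ and $\partial^2_{uv}E_f[\Z^3]=-3C^2\sum_{m,n,p}(p^4-n^2p^2)f''(CI)-3\cdot2^{1/3}C\sum_{m,n,p}p^2f'(CI)$, i.e.\ the coefficients $4C$ and $-2C^2$ in Proposition \ref{2ndZ3} disagree with the paper's own appendix; with the appendix values the determinant becomes the perfect square
\begin{equation*}
\partial^2_{uu}\,\partial^2_{vv}-\left(\partial^2_{uv}\right)^2=3\left(C^2\sum_{m,n,p}(p^4-n^2p^2)f''(CI)+2^{1/3}C\sum_{m,n,p}p^2f'(CI)\right)^2\geq 0,
\end{equation*}
exactly as your representation-theoretic prediction requires (the corrected entries are proportional, with ratios $-2^{1/3}$ and $2^{8/3}/3$, to the Gram entries of the two $E_g$ tangent vectors). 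So if you carry your program out, the determinant criterion should be replaced by the sign of the single $E_g$-eigenvalue, equivalently of the corrected $\partial^2_{uu}$; the qualitative statement of point 1 (a minimality window with saddle behaviour on both sides, never a maximum) survives, but the upper endpoint $V_2\approx 1.344$, computed from the uncorrected entries, must be recomputed, and your demand for certified numerics is then the only remaining work.
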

In particular, for Lennard-Jones-type interactions, both FCC and BCC lattices are local minimizers at high density and local maximizers at low density. For the classical Lennard-Jones case, i.e. $a_1=2$, $a_2=1$, $x_1=3$, $x_2=6$, we find a interval of values of $V$ such that $D_3$ and $D_3^*$ are saddle points of $E_f$. We notice that it is not the case in dimension $d=2$, where the triangular lattice is a local minimizer or a local maximizer for almost every values of the area of the lattices (see \cite{Beterloc}). Furthermore it turns out -- and this is not surprising -- that $\Z^2$ and $\Z^3$ have the same behaviour in terms of local minimality: they are saddle points for both theta and Epstein zeta functions and there exists an open interval where they are local minimizers for the Lennard-Jones-type energy. Moreover, using the analytic continuation of $\zeta_L$, this theorem stays true for $0<x_1<x_2$. From the previous result, we get the local optimality of the FCC and BCC lattices among all the Bravais lattices (without volume restriction):

\begin{prop}[See Prop. \ref{prop-locglobalLJ} below]\label{prop-MainLocLJGlob}
Let $f$ be defined by \eqref{defLJones} with $a_1=2$, $a_2=1$, $x_1=3$ and $x_2=6$. Let $V^*(\Z^3)^{1/3}\Z^3$ (resp. $V^*(D_3)^{1/3}D_3$ and $V^*(D_3^*)^{1/3}D_3^*$) be the minimizer of $E_f$ among all the dilated of $\Z^3$ (resp. $D_3$ and $D_3^*$) with volume $V^*(\Z^3)$ (resp. $V^*(D_3)$ and $V^*(D_3^*)$). Then we have:
\begin{enumerate}
\item  $V^*(\Z^3)^{1/3}\Z^3$ is not a local minimizer of $E_f$ among all the Bravais lattices;
\item $V^*(D_3)^{1/3}D_3$ and $V^*(D_3^*)^{1/3}D_3^*$ are local minimizers of $E_f$ among all the Bravais lattices.
\end{enumerate}
\end{prop}

Thus, for the global minimality of $E_f$, it is natural, from Proposition \ref{prop-MainLocLJGlob}, from the Sarnak-Str\"ombergsson conjecture, from our two-dimensional investigations \cite{Betermin:2014fy,BetTheta15,Beterloc} and from the stability results of Born et al. \cite{born1940,misra1940}, to state the following conjecture about the global minimality, i.e. without a volume constraint, of $D_3$ and $D_3^*$

\begin{Conjecture}
Let $f$ defined by \eqref{defLJones}, with $(a_1,a_2)\in (0,+\infty)^2$, then:
\begin{enumerate}
\item If $3/2<x_1<x_2$, then the minimizer of $E_f$ among all Bravais lattices is unique and it is a FCC lattice.
\item If $0<x_1<x_2<3/2$, then the minimizer of $E_f$ among all Bravais lattices is unique and it is a BCC lattice.
\end{enumerate}
\end{Conjecture}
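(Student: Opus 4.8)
The statement is a global (density-unconstrained) minimization, so the plan is to remove the scaling degree of freedom first and then attack the remaining shape problem. Write any Bravais lattice as $L=\lambda L_1$ with $\lambda>0$ and $L_1$ of unit covolume; the homogeneity $\zeta_{\lambda L_1}(2s)=\lambda^{-2s}\zeta_{L_1}(2s)$ gives $E_f[\lambda L_1]=a_2\lambda^{-2x_2}\zeta_{L_1}(2x_2)-a_1\lambda^{-2x_1}\zeta_{L_1}(2x_1)$. For fixed $L_1$ this is minimized at the unique $\lambda$ with $\lambda^{2(x_2-x_1)}=\frac{x_2a_2\zeta_{L_1}(2x_2)}{x_1a_1\zeta_{L_1}(2x_1)}$, and back-substitution shows that the minimal value over the scaling equals a strictly negative constant, depending only on $(a_1,a_2,x_1,x_2)$, times
\[
\Phi(L_1):=\zeta_{L_1}(2x_1)^{\frac{x_2}{x_2-x_1}}\,\zeta_{L_1}(2x_2)^{-\frac{x_1}{x_2-x_1}}.
\]
Hence the global problem is equivalent to \emph{maximizing} $\Phi$ over unit-covolume lattices, and — a reassuring sanity check for the conjecture — the optimal shape then depends only on $(x_1,x_2)$, not on $(a_1,a_2)$.

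Next I would set up existence carefully, since pure coercivity fails: as a lattice vector of length $\e$ appears (a cusp of the unit-covolume moduli space), the exponents $\frac{x_2}{x_2-x_1}$ and $\frac{x_1}{x_2-x_1}$ are tuned so that the leading $\e^{-2x_1\frac{x_2}{x_2-x_1}}$ and $\e^{-2x_2\frac{x_1}{x_2-x_1}}$ contributions cancel, and $\Phi$ extends continuously to a finite limit at each cusp given by the analogous lower-rank ratio. Thus $\Phi$ attains its supremum on the compactified moduli space, and the real content is to show that the maximizer is the interior point $D_3$ and that it \emph{strictly} exceeds every cusp value; these degenerate limits are one- and two-dimensional analogues and should be controlled using the optimality of the triangular lattice from \cite{Betermin:2014fy,BetTheta15}. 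Any such maximizer is a critical point of $\Phi$, and by the Proposition stated above $\Z^3$, $D_3$ and $D_3^*$ are among the critical lattices, while the local analysis of Theorem \ref{MainTh2} already shows that at high density $D_3$ and $D_3^*$ are the locally minimizing candidates and $\Z^3$ is only a saddle.

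The core of the argument — and the step I expect to be the genuine obstacle — is to prove that the global maximizer of $\Phi$ is $D_3$ when $x_1,x_2>3/2$. The natural temptation is to invoke the Sarnak–Str\"ombergsson conjecture, under which $D_3$ minimizes every $\zeta_{L_1}(2s)$ at fixed density for $s>3/2$; but this does \emph{not} close the argument, because maximizing $\Phi$ wants $\zeta_{L_1}(2x_1)$ \emph{large} and $\zeta_{L_1}(2x_2)$ \emph{small}, whereas $D_3$ minimizes both. Equivalently, writing $E_f[L]=\int_0^\infty w(t)\big(\theta_L(t)-1\big)\,dt$ with $w(t)=\frac{a_2 t^{x_2-1}}{\Gamma(x_2)}-\frac{a_1 t^{x_1-1}}{\Gamma(x_1)}$, the weight $w$ changes sign exactly once, so $\theta_L$ cannot be compared termwise with $\theta_{D_3}$; this sign change interacts with the crossover of the theta minimizer from $D_3$ to $D_3^*$ at $\alpha=\pi$, and it is precisely this competition that makes the unconstrained statement strictly harder than the fixed-density ones and keeps it conjectural. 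A realistic route would be to first reduce the dimension of the moduli space by showing the maximizer is well-rounded and highly symmetric, then combine the Hessian formulas of Proposition \ref{2ndD3} with rigorous interval arithmetic on the reduced family, in the spirit of the fixed-$\alpha$ arguments of \cite{BeterminPetrache}.

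Finally I would obtain the second case from the first by duality, as Ennola does for the local problem: Poisson summation yields the functional equation relating $\zeta_L(2s)$ to $\zeta_{L^*}\big(2(\tfrac32-s)\big)$ and exchanges $D_3=L$ with $D_3^*=L^*$ at the self-dual value $s=\tfrac32$, which is exactly the threshold separating the regimes $x_i>3/2$ and $x_i<3/2$ in the conjecture. Transporting the maximization of $\Phi$ through this transform should send the $x_1,x_2>3/2$ statement for $D_3$ to the $0<x_1<x_2<3/2$ statement for $D_3^*$, the only care needed being the bookkeeping of how the pair of exponents and the multiplicative constant in front of $\Phi$ transform under the functional equation.
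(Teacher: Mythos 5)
This statement is stated in the paper as a \emph{Conjecture}: the paper contains no proof of it, and the author explicitly remarks that a proof of the Sarnak--Str\"ombergsson conjecture should presumably precede any proof of this one. So the only question is whether your proposal actually closes the conjecture, and it does not --- as you yourself concede. Your preliminary reduction is correct and is indeed the standard first move (it is how the analogous two-dimensional problems are treated in \cite{Betermin:2014fy,BetTheta15}): optimizing over the dilation $\lambda$ with $\lambda^{2(x_2-x_1)}=\frac{x_2a_2\zeta_{L_1}(2x_2)}{x_1a_1\zeta_{L_1}(2x_1)}$ does turn the unconstrained problem into maximizing $\Phi(L_1)=\zeta_{L_1}(2x_1)^{\frac{x_2}{x_2-x_1}}\zeta_{L_1}(2x_2)^{-\frac{x_1}{x_2-x_1}}$, and your observation that Sarnak--Str\"ombergsson would \emph{not} immediately finish the job (since maximizing $\Phi$ wants $\zeta_{L_1}(2x_1)$ large while $D_3$ makes it small) is accurate and worth stating. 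But the central step --- that $D_3$ is the global maximizer of $\Phi$ for $x_1,x_2>3/2$ --- is precisely the conjecture after your change of variables, and nothing you invoke reaches it: Proposition \ref{criticality} and the Hessian information behind Theorem \ref{MainTh2} and Proposition \ref{locoptFCCLJ} are purely local statements on a $5$-parameter moduli space, the proposed well-roundedness/symmetry reduction is itself unproven, and the continuity of $\Phi$ at the cusps with \emph{strictly} smaller boundary values (needed for your compactness argument) is asserted, not established. A proposal whose key step is announced as ``the genuine obstacle'' that ``keeps it conjectural'' is a research program, not a proof.

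There is also a concrete error in your duality step, so even the claimed reduction of case 2 to case 1 fails. The involution in the functional equation \eqref{fcteqzeta} is $s\mapsto\tfrac32-s$, whose fixed point is $s=\tfrac34$, not $s=\tfrac32$; it maps the regime $s>\tfrac32$ onto $s<0$ and maps $(0,\tfrac32)$ to itself. Hence transporting the statement for exponents $x_1,x_2>\tfrac32$ through Poisson summation produces a statement about \emph{negative} exponents, not about $0<x_1<x_2<\tfrac32$. Moreover, for $0<x_1<x_2<\tfrac32$ the lattice sum defining $E_f$ diverges and the energy only makes sense through the analytic continuation of $\zeta_L$, where positivity is lost: $\zeta_L(2s)\to-\infty$ as $2s\to3^-$ and $\zeta_L(0)=-1$, so $\zeta_{L_1}(2x_2)$ can be negative, in which case your optimal dilation $\lambda$ does not exist (the energy tends to $-\infty$ under scaling) and the ``maximize the positive ratio $\Phi$'' structure of your first paragraph collapses. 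Case 2 therefore requires a genuinely separate treatment, not bookkeeping of Gamma factors; this is consistent with the paper's own stance, which leaves both cases open.
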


Obviously, it is reasonable to think that a proof of Sarnak-Str\"ombergsson conjecture should be done before to be able to prove this one.\\

In Section \ref{secenergylattice}, we start with some preliminaries: we recall the Ennola's parametrization of the ternary quadratic forms, the definitions of lattices and energies and some important properties of the automorphs of the forms associated to Simple Cubic and Face-Centred-Cubic lattices. In Section \ref{seccritic} we prove the fact that the cubic lattices are critical points for any potential. In Section \ref{sec2ndorder} we compute the second order derivatives at lattices $\Z^3$ and $D_3$ thanks to general formulas of these derivatives given in Section \ref{secAppendix} and Lemma \ref{automorphs}. Thus, Theorems \ref{MainTh1} and \ref{MainTh2} are proved in Sections \ref{sectheta} and \ref{secLJ}, as well as Proposition \ref{prop-MainLocLJGlob}.

\section{Energies and lattices}\label{secenergylattice}

In this part, we show how to parametrize a Bravais lattice $L\subset \R^3$ from its associated quadratic form, following \cite{Ennola}. Thus, given an integrable potential $f$, we define its finite energy $E_f[L]$. Furthermore, the automorphs of the quadratic forms associated with $\Z^3$ and $D_3$ are used to simplify some lattice sums, as in \cite{Ennola}.

\subsection{Parametrization of a Bravais lattice and energy}
We parametrize any Bravais lattice $L=\Z v_1 \oplus \Z v_2 \oplus \Z v_3$ of fixed volume (for its primitive cell) $V$ by $(u,v,x,y,z)$ such that its associated quadratic form is given by
$$
Q_L(m,n,p)=\frac{V^{2/3}2^{1/3}}{u}\left[ (m+xn+yp)^2+v^2(n+zp)^2+\frac{u^3}{2v^2}p^2 \right].
$$
Throughout this paper, we will use the notation $C:=V^{2/3}2^{1/3}$. We notice that this remains to parametrize $v_1,v_2,v_3$ by
\begin{equation}\label{defv}
v_1=\sqrt{C}\left(\frac{1}{\sqrt{u}},0,0  \right),\quad v_2=\sqrt{C}\left(\frac{x}{\sqrt{u}},\frac{v}{\sqrt{u}},0  \right),\quad v_3=\sqrt{C}\left(\frac{y}{\sqrt{u}},\frac{vz}{\sqrt{u}}, \frac{u}{v\sqrt{2}}  \right).
\end{equation}
We define the set of potential $\mathcal{F}$ by
\begin{equation}\label{defF}
\mathcal{F}:=\left\{ f\in C^2((0,+\infty)); \forall k\in\{0,1,2\}, |f^{(k)}(r)|=O(r^{-3/2-k-\eta_k}), \textnormal{ for some }\eta_k>0 \right\}.
\end{equation}
Thus, for any $f\in\mathcal{F}$, we define the $f$-energy of $L$ by
$$
E_f[L]:=\sum_{p\in L\backslash\{0\}} f(|p|^2)=\sum_{m,n,p}f\left(\frac{C}{u}\left[ (m+xn+yp)^2+v^2(n+zp)^2+\frac{u^3}{2v^2}p^2 \right]  \right),
$$
where the sum is taken over all $(m,n,p)\in \Z^3 \backslash\{(0,0,0)\}$. Thus, let $L$ be a Bravais lattice of volume $V$ parametrize by $(u,v,x,y,z)$, then $L\mapsto E_f[L]$ can be viewed as a function of $5$ variables (the volume $V$ is fixed), i.e.
$$
E_f(u,v,x,y,z,V):=E_f[L].
$$
We will say that $L_0$, parametrized by $(u_0,v_0,x_0,y_0,z_0)$, is a critical point (lattice) of $E_f$ among the lattices of fixed volume $V_0$ if $(u_0,v_0,x_0,y_0,z_0,V_0)$ is a critical point of $E_f$. The same will hold for the local optimality of $L_0$ among Bravais lattices of fixed volume $V_0$.

Furthermore, we consider three important lattices, called ``cubic lattices":
\begin{enumerate}
\item the FCC lattice parametrized by $D_3=(u,v,x,y,z)=(1,1,0,1/2,1/2)$;
\item the BCC lattice, which is the dual $D_3^*$ of the FCC lattice;
\item the cubic lattice parametrized by $\Z^3=(u,v,x,y,z)=(2^{1/3},1,0,0,0)$.
\end{enumerate}

\begin{figure}[!h]
\centering
\includegraphics[width=9cm]{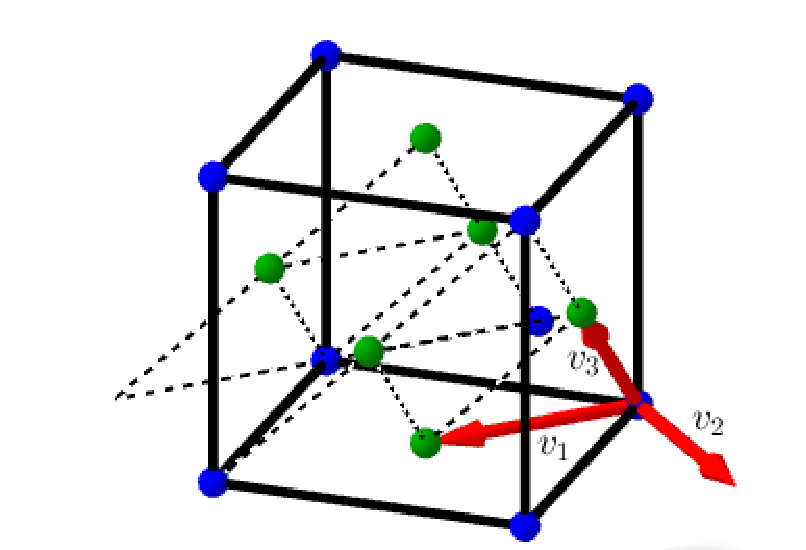}
\caption{Face-Centred-Cubic lattice $D_3=(1,1,0,1/2,1/2)$ generated by the basis (in red) $v_1=(\sqrt{C},0,0)$, $v_2=(0,\sqrt{C},0)$ and $v_3=(\sqrt{C}/2,\sqrt{C}/2,\sqrt{2C}/2)$. The green points are the center of the faces with blue vertices. The dash lines show how the FCC lattice is constructed layer-by-layer from the translated copies of a triangular lattice.}
\label{Parametrization}
\end{figure}
With the notation $I(m,n,p)=2^{-1/3}(m^2+n^2+p^2)$, we get
$$
Q_{V^{1/3}\Z^3}(m,n,p)=C I(m,n,p).
$$
We will write
$$
R(m,n,p)=m^2+n^2+p^2+mp+np,
$$
such that $Q_{V^{1/3}D_3}(m,n,p)=C R(m,n,p)$, and
$$
T(m,n,p):=mn(m+p)(n+p).
$$
We recall \cite[Eq. (18)]{Ennola}:
\begin{equation}\label{ineqRT}
4|T(m,n,p)|\leq R(m,n,p)^2.
\end{equation}
We will write $Q_L$, $I$, $R$ and $T$ instead of $Q_L(m,n,p)$, $I(m,n,p)$, $R(m,n,p)$ and $T(m,n,p)$.\\
Furthermore, we remark that
\begin{equation}\label{zetaR}
\zeta_{2^{-1/3}D_3}(2s)=\sum_{m,n,p}\frac{1}{R(m,n,p)^s}.
\end{equation}
We also define the following function
\begin{equation}\label{defY}
Y(s)=\sum_{m,n,p} \frac{T(m,n,p)}{R(m,n,p)^{s+2}}.
\end{equation}
We also recall the functional equation of the Epstein zeta function: for any $s>3/2$,
\begin{equation}\label{fcteqzeta}
\pi^{-s}\Gamma(s)\zeta_{L}(2s)=\pi^{-(3/2-s)}\Gamma(3/2-s)\zeta_{L^*}(3-2s).
\end{equation}

\subsection{Automorphs of the quadratic forms}\label{secauto}

First, it is totally clear that, for any $(m,n,p)\in\Z^3$, we have, for every permutation $\sigma$ of integers $(m,n,p)$,
\begin{equation}\label{scubicsym}
I(m,n,p)=I(\sigma(m,n,p))=I(-m,n,p)=I(m,-n,p)=I(m,n,-p).
\end{equation}

Second,  Ennola \cite{Ennola} showed that it is possible to use the automorphs of $R$, i.e. the matrices $A$ such that, for  any $(m,n,p)\in\Z^3$, 
$$
R(A(m,n,p))=R(m,n,p),
$$
in order to simplify the lattice sums of type $\sum_{m,n,p}Q(m,n,p)F(R(m,n,p))$ for any given function $F$ such that this sum converges. In the following lemma, we give the results of \cite[p. 861 and 864]{Ennola} in terms of lattice sums.

\begin{lemma}[\cite{Ennola}]\label{automorphs}
For any $F:\R_+\to \R$ such that the following sums are convergent, we have
\begin{align}
&\label{aut1} \sum_{m,n,p}(4n^2+4np-p^2)F(R)=0,\\
&\label{aut2} \sum_{m,n,p}n(2m+p)F(R)=0,\\
&\label{aut3}\sum_{m,n,p}p(2m+p)F(R)=0,\\
&\label{aut4}\sum_{m,n,p}n(2n+p)F(R)=0,\\
& \label{aut5} \sum_{m,n,p}p^2F(R)=\frac{2}{3}\sum_{m,n,p} R F(R),\\
&\label{aut6}\sum_{m,n,p}(4n^2+4np-p^2)^2 F(R)=\sum_{m,n,p}\left(\frac{10}{3}R^2+\frac{56}{3}T  \right)F(R),\\
&\label{aut7} \sum_{m,n,p}n^2 F(R)=\frac{1}{2}\sum_{m,n,p} R F(R),\\
&\label{aut8} \sum_{m,n,p}n^2(2m+p)^2F(R)=\sum_{m,n,p}\left( \frac{1}{3}R^2+\frac{4}{3}T \right)F(R),\\
&\label{aut9} \sum_{m,n,p}p^2(2m+p)^2F(R)=\sum_{m,n,p}\left( \frac{2}{3}R^2-\frac{8}{3}T \right)F(R),\\
&\label{aut10} \sum_{m,n,p}p^2(2n+p)^2F(R)=\sum_{m,n,p}\left( \frac{2}{3}R^2-\frac{8}{3}T \right)F(R),\\
&\label{aut11} \sum_{m,n,p} n(2m+p)(4n^2+4np-p^2)F(R)=0,\\
&\label{aut12} \sum_{m,n,p} p(2m+p)(4n^2+4np-p^2)F(R)=0,\\
&\label{aut13} \sum_{m,n,p} p(2n+p)(4n^2+4np-p^2)F(R)=0,\\
&\label{aut14} \sum_{m,n,p} npF(R)=-\frac{1}{3}\sum_{m,n,p} RF(R),\\
&\label{aut15} \sum_{m,n,p} np(2m+p)^2 F(R)=\sum_{m,n,p}\left( -\frac{R^2}{3}+\frac{4T}{3}  \right)F(R),\\
&\label{aut16} \sum_{m,n,p} np(2m+p)(2n+p)F(R)=0,\\
&\label{aut17} \sum_{m,n,p} p^2(2m+p)(2n+p)F(R)=0,\\
&\label{aut18} \sum_{m,n,p} p^4 F(R)=\sum_{m,n,p}\left(\frac{2}{3}R^2+\frac{8}{3}T  \right)F(R),\\
&\label{aut19}  \sum_{m,n,p}p^2(4n^2+4np-p^2)F(R)=-\sum_{m,n,p}\left( \frac{2}{3}R^2+8T \right)F(R).
\end{align}
\end{lemma}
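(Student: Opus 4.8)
The entire lemma is powered by one re-indexing observation, so I would build everything around it. Call a matrix $A\in GL_3(\Z)$ an \emph{automorph} of $R$ if $R(A(m,n,p))=R(m,n,p)$ for all $(m,n,p)\in\Z^3$. Since such an $A$ is a bijection of $\Z^3$ leaving $R$ invariant, re-indexing any absolutely convergent sum by $v\mapsto Av$ gives, for every polynomial $Q$,
\[
\sum_{m,n,p}Q(m,n,p)\,F(R)=\sum_{m,n,p}Q(A(m,n,p))\,F(R),
\]
hence $\sum_{m,n,p}(Q-Q\circ A)\,F(R)=0$. Averaging over the finite group $G$ of all automorphs yields $\sum_{m,n,p}Q\,F(R)=\sum_{m,n,p}\overline{Q}\,F(R)$, where $\overline{Q}=|G|^{-1}\sum_{A\in G}Q\circ A$ is $G$-invariant. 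First I would recall from Ennola the explicit automorphs of $R$; concretely one checks by direct substitution that the swap $(m,n,p)\mapsto(n,m,p)$, the shear $(m,n,p)\mapsto(m+p,n+p,-p)$ and the reflection $(m,n,p)\mapsto(m,n,-m-n-p)$ all preserve $R$, and that $G$ is the point group of the FCC lattice, a finite group of order $48$.

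\textbf{Structural heart.} The shape of every right-hand side is then dictated by invariant theory. Because $\overline{Q}$ is a $G$-invariant homogeneous polynomial, it lives in a very small space: the $G$-invariant quadratic forms are spanned by $R$ alone, while the $G$-invariant quartic forms form a two-dimensional space spanned by $R^2$ and $T$. Here one verifies that $T=mn(m+p)(n+p)$ is itself $G$-invariant (for instance the shear sends $T$ to $(m+p)(n+p)\,m\,n=T$, and the reflection sends it to $mn(n+p)(m+p)=T$) and that $R^2$ and $T$ are independent, since $R^2$ contains the monomial $m^4$ while $T$ does not. Consequently $\overline{Q}=cR$ for degree-two $Q$ and $\overline{Q}=\alpha R^2+\beta T$ for degree-four $Q$, so that
\[
\sum_{m,n,p}Q\,F(R)=c\sum_{m,n,p}R\,F(R)\qquad\text{resp.}\qquad \sum_{m,n,p}Q\,F(R)=\alpha\sum_{m,n,p}R^2F(R)+\beta\sum_{m,n,p}T\,F(R).
\]
This already explains the list: the vanishing identities \eqref{aut1}--\eqref{aut4}, \eqref{aut11}--\eqref{aut13}, \eqref{aut16}, \eqref{aut17} are exactly the cases where the constants are zero, and the remaining ones are genuine two-term combinations of $\sum R^2F(R)$ and $\sum T\,F(R)$.

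\textbf{Determining the constants.} For the quadratic identities I would extract just enough linear relations from the engine: the swap gives $\sum m^2F=\sum n^2F$ and $\sum mpF=\sum npF$; the shear applied to $Q=mp$ gives $2\sum mpF+\sum p^2F=0$; and the reflection together with a further automorph such as $(m,n,p)\mapsto(n+p,n,m-n)$, each applied to $Q=p^2$, supply two more relations. Combined with the tautology $\sum RF=\sum(m^2+n^2+p^2+mp+np)F$, these solve uniquely to $\sum n^2F=\tfrac12\sum RF$, $\sum p^2F=\tfrac23\sum RF$ and $\sum npF=-\tfrac13\sum RF$, i.e. \eqref{aut7}, \eqref{aut5} and \eqref{aut14}; the vanishing quadratic identities then follow by substitution. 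For the quartic identities the cleanest route, once the two-dimensionality is in hand, is to determine $(\alpha,\beta)$ by testing the polynomial identity $\overline{Q}=\alpha R^2+\beta T$ on two level sets: restricting $F$ to the indicator of a shell $\{R=k\}$ turns each side into a finite sum over that shell, and two values of $k$ give two equations for $(\alpha,\beta)$ whose solution is then valid for all $F$. Each left-hand quartic, such as $p^4$, $n^2(2m+p)^2$ or $(4n^2+4np-p^2)^2$, is handled this way.

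\textbf{Main obstacle.} The difficulty is organisational rather than conceptual. The quartic cases require expanding bulky products like $(4n^2+4np-p^2)^2$ into many monomials and carrying each carefully through the automorph relations (or the shell evaluations), and one must confirm that the chosen automorphs really generate enough of $G$ for the invariant-theoretic reduction to be forced; this is precisely where I would lean on Ennola's explicit list rather than rebuild $G$ from scratch. Throughout, the rearrangements are legitimate because the hypotheses on $F$ arising in the applications make the sums absolutely convergent, and the bound $4|T|\le R^2$ from \eqref{ineqRT} guarantees that the $T$-sums are dominated by the $R^2$-sums, so that the two-term right-hand sides are themselves well defined.
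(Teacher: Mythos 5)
Your strategy is, in substance, the proof behind this lemma: the paper itself gives no argument (it simply repackages Ennola's computations from pp.~861 and 864 as lattice-sum identities), and Ennola's method is exactly your re-indexing engine — substitute explicit integral automorphs of $R$ into the absolutely convergent sums and average. Your listed maps are genuine automorphs (the swap, the shear $(m,n,p)\mapsto(m+p,n+p,-p)$, the reflection $(m,n,p)\mapsto(m,n,-m-n-p)$ and $(m,n,p)\mapsto(n+p,n,m-n)$ all preserve $R$, as direct substitution confirms), the invariant-theoretic frame is right (the point group acts irreducibly, so quadratic invariants are multiples of $R$; Molien's series for the octahedral group gives a two-dimensional space of quartic invariants, and $T$ is indeed invariant and independent of $R^2$), and your worry about generation is harmless: it suffices that the quartic invariants of the subgroup generated by your maps are still spanned by $R^2$ and $T$, which holds. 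One technical caveat in your shell-evaluation scheme for the quartic constants $(\alpha,\beta)$: the shell $R=1$ has $\sum T=0$ (this is Ennola's $A(1)=0$), so you must pick a second shell with nonvanishing $T$-sum, e.g.\ $t=2$, where $A(2)>0$ guarantees nondegeneracy of the $2\times 2$ system.

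There is, however, one concrete point where your claim that the vanishing identities \eqref{aut1}--\eqref{aut4} ``follow by substitution'' breaks — and your own method detects it. Solving your linear system gives $\sum n^2F=\tfrac12\sum RF$, $\sum p^2F=\tfrac23\sum RF$, $\sum npF=\sum mpF=-\tfrac13\sum RF$ and $\sum mnF=\tfrac16\sum RF$; substituting into \eqref{aut4} yields $\sum n(2n+p)F=\bigl(2\cdot\tfrac12-\tfrac13\bigr)\sum RF=\tfrac23\sum RF$, which is \emph{not} zero. Indeed \eqref{aut4} as printed is false: taking $F$ to be the indicator of the shell $R=1$ (twelve vectors), one computes $\sum n(2n+p)=8=\tfrac23\cdot 12$. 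The identity actually needed in the proof of Proposition \ref{criticality} (for $\partial_z E_f[D_3]$) is $\sum_{m,n,p}p(2n+p)F(R)=0$, and this your constants do deliver: $2\bigl(-\tfrac13\bigr)+\tfrac23=0$. So \eqref{aut4} carries a misprint ($n(2n+p)$ should read $p(2n+p)$), and your write-up should prove the corrected identity rather than assert that the printed one has vanishing constant. With that amendment, and spot-checks such as $\sum p^4=8=\tfrac23\cdot12+\tfrac83\cdot 0$ on the shell $R=1$ confirming \eqref{aut18}, your reconstruction is sound and coincides with the cited source's proof.
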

Furthermore, it is clear that exchanging $m$ and $n$ lets $R$ invariant, i.e. for any $(m,n,p)\in \Z^3$,
\begin{equation}\label{exmn}
R(n,m,p)=R(m,n,p).
\end{equation}

\section{Criticality}\label{seccritic}

In this section, we prove the fact that $D_3$, $D_3^*$ and $\Z^3$ are critical points of $E_f$ for any $f$ and for any fixed volume $V$.

\begin{lemma}[First derivatives of the energy] For any $f\in \mathcal{F}$, any fixed volume $V>0$ and any $L=(u,v,x,y,z)$,
\begin{align*}
&\partial_u E_f[L]=C\sum_{m,n,p}\left( -\frac{(m+xn+yp)^2}{u^2}-\frac{v^2(n+zp)^2}{u^2}+\frac{u}{v^2}p^2  \right)f'\left( Q_L \right),\\
&\partial_v E_f[L]=C\sum_{m,n,p}\left(\frac{2v(n+zp)^2}{u}-\frac{u^2}{v^3}p^2  \right)f'\left( Q_L \right),\\
&\partial_x E_f[L]=\frac{2C}{u}\sum_{m,n,p} n(m+xn+yp) f'\left( Q_L \right),\\
&\partial_y E_f[L]=\frac{2C}{u}\sum_{m,n,p} p(m+xn+yp) f'\left( Q_L \right),\\
&\partial_z E_f[L]=\frac{2Cv^2}{u}\sum_{m,n,p} p(n+pz) f'\left( Q_L \right).
\end{align*}
\end{lemma}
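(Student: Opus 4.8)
The plan is to differentiate the series defining $E_f[L]$ term by term, applying the chain rule to each summand $f(Q_L(m,n,p))$ and then regrouping. The only genuine analytic point is to legitimise differentiation under the summation sign; once this is granted, the five formulas follow from an elementary computation of the partial derivatives $\partial_\xi Q_L$ for $\xi\in\{u,v,x,y,z\}$.

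First I would justify term-by-term differentiation. Fix an admissible parameter point $(u,v,x,y,z)$ with $u,v>0$; then $Q_L$ is a positive definite ternary quadratic form, so there is a constant $c>0$ (locally uniform in the parameters) with $Q_L(m,n,p)\ge c\,(m^2+n^2+p^2)$ for all $(m,n,p)\neq 0$. Each prefactor $\partial_\xi Q_L$ computed below is a polynomial in $(m,n,p)$ of degree at most two, so $|\partial_\xi Q_L|=O(m^2+n^2+p^2)$ with constants bounded on a neighbourhood of the chosen point. Combining this with the defining decay $|f'(r)|=O(r^{-5/2-\eta_1})$ of $\mathcal{F}$ and the lower bound on $Q_L$, the generic differentiated summand is controlled by
$$
\bigl|\partial_\xi Q_L\bigr|\,\bigl|f'(Q_L)\bigr|=O\!\left((m^2+n^2+p^2)^{1}\,(m^2+n^2+p^2)^{-5/2-\eta_1}\right)=O\!\left((m^2+n^2+p^2)^{-3/2-\eta_1}\right),
$$
which is summable over $\Z^3\setminus\{0\}$ and admits a parameter-uniform majorant on a compact neighbourhood. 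The standard theorem on differentiation of a uniformly convergent series of $C^1$ functions then permits interchanging $\partial_\xi$ and $\sum_{m,n,p}$, giving $\partial_\xi E_f[L]=\sum_{m,n,p}f'(Q_L)\,\partial_\xi Q_L$.

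It then remains to compute the partial derivatives of $Q_L$. Writing $A:=m+xn+yp$ and $B:=n+zp$, we have
$$
Q_L=\frac{C}{u}\Bigl[A^2+v^2B^2+\frac{u^3}{2v^2}p^2\Bigr]=C\Bigl[\frac{A^2}{u}+\frac{v^2B^2}{u}+\frac{u^2}{2v^2}p^2\Bigr].
$$
Differentiating directly gives $\partial_u Q_L=C\bigl(-A^2/u^2-v^2B^2/u^2+u\,p^2/v^2\bigr)$ and $\partial_v Q_L=C\bigl(2vB^2/u-u^2p^2/v^3\bigr)$, while $\partial_x A=n$, $\partial_y A=p$, $\partial_z B=p$ yield $\partial_x Q_L=(2C/u)\,nA$, $\partial_y Q_L=(2C/u)\,pA$ and $\partial_z Q_L=(2Cv^2/u)\,pB$. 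Substituting these into $\partial_\xi E_f[L]=\sum_{m,n,p}f'(Q_L)\,\partial_\xi Q_L$ and restoring $A=m+xn+yp$, $B=n+zp$ produces exactly the five stated expressions.

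The main, and essentially only, obstacle is the justification of differentiation under the sum, which is precisely why the bound $|f'(r)|=O(r^{-5/2-\eta_1})$ is built into the definition of $\mathcal{F}$: the quadratic growth of the prefactors consumes exactly one unit of decay, leaving a summand of order $Q_L^{-3/2-\eta_1}$ with summable exponent $3/2+\eta_1>3/2$. Beyond this bookkeeping the derivation is a routine application of the chain rule.
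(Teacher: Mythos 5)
Your proof is correct. The paper states this lemma without proof, treating it as a routine chain-rule computation, and your argument supplies exactly that: the derivatives $\partial_\xi Q_L$ you compute match the stated formulas, and your justification of term-by-term differentiation via the bound $|f'(r)|=O(r^{-5/2-\eta_1})$ from the definition of $\mathcal{F}$, together with the locally uniform lower bound $Q_L(m,n,p)\ge c\,(m^2+n^2+p^2)$, is precisely the standard argument the paper implicitly relies on.
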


\begin{prop}\label{criticality}
 For any $f\in\mathcal{F}$ and any fixed $V>0$, $D_3$, $D_3^*$ and $\Z^3$ are critical points of $L\mapsto E_f[L]$ among Bravais lattices of fixed volume $V$.
\end{prop}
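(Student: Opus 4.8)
The plan is to apply the five explicit first-derivative formulas from the preceding lemma and to check that $\partial_u,\partial_v,\partial_x,\partial_y,\partial_z$ all vanish once the coordinates $(u,v,x,y,z)$ of each lattice are substituted. Before rearranging any series I would first record the convergence input: for $f\in\mathcal F$ we have $|f'(r)|=O(r^{-5/2-\eta_1})$, and since $Q_L$ is a positive definite form we have $Q_L\asymp|(m,n,p)|^2$, so every summand in the derivative sums is $O(|(m,n,p)|^{-3-2\eta_1})$, giving absolute convergence. This is exactly what licenses the sign changes, coordinate permutations and automorph substitutions used below.

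For $\Z^3=(2^{1/3},1,0,0,0)$ one has $Q_L=C\,I(m,n,p)$, so $f'(Q_L)$ depends only on $m^2+n^2+p^2$ and is invariant under the permutations and sign changes \eqref{scubicsym}. Since $x=y=z=0$, the integrands of $\partial_x,\partial_y,\partial_z$ reduce to $nm$, $pm$, $pn$ times $f'(Q_L)$; each is odd under a single coordinate reflection, so these three derivatives vanish after pairing each term with its reflection. By permutation symmetry $\sum m^2 f'(Q_L)=\sum n^2 f'(Q_L)=\sum p^2 f'(Q_L)=:S$; with $v=1$ the bracket in $\partial_v$ becomes a difference of two equal such sums, and $\partial_u$ equals $CS\,(u-2u^{-2})$, whose coefficient vanishes precisely when $u^3=2$. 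Thus $u=2^{1/3}$ is exactly the value making the simple cubic lattice critical.

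For $D_3=(1,1,0,1/2,1/2)$ one has $Q_L=C\,R(m,n,p)$, so writing $F(R):=f'(CR)$ each derivative is a sum of the type handled in Lemma \ref{automorphs}. Concretely $\partial_x\propto\sum n(2m+p)F(R)=0$ by \eqref{aut2}, $\partial_y\propto\sum p(2m+p)F(R)=0$ by \eqref{aut3}, and $\partial_z\propto\sum p(2n+p)F(R)=0$ by \eqref{aut3} after the exchange \eqref{exmn}. Using $(m+\tfrac12 p)^2+(n+\tfrac12 p)^2+\tfrac12 p^2=R$ turns $\partial_u$ into $C\sum(-R+\tfrac32 p^2)F(R)$, which is $0$ by \eqref{aut5}; and $\partial_v$ becomes $C\sum(2n^2+2np-\tfrac12 p^2)F(R)$, whose coefficients against \eqref{aut7}, \eqref{aut14}, \eqref{aut5} sum to $1-\tfrac23-\tfrac13=0$.

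The genuinely new point, and the step I expect to be the main obstacle, is $D_3^*$. In this parametrization the BCC lattice is $(2^{-1/3},1,0,1/2,1/2)$ (it differs from $D_3$ only in $u$, with basis proportional to $(2,0,0),(0,2,0),(1,1,1)$), and its quadratic form $R^*=m^2+n^2+\tfrac34 p^2+mp+np$ is \emph{not} the form $R$ for which Lemma \ref{automorphs} is stated, nor is it unimodularly equivalent to it. I would resolve this in the same spirit: the point group of the BCC lattice is again the full cubic group, so its automorphs yield the analogues of the identities in Lemma \ref{automorphs} for $R^*$, and substituting $(2^{-1/3},1,0,1/2,1/2)$ into the derivative formulas and invoking these kills all five partials. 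The only real labour is re-deriving the handful of needed $R^*$-identities. Alternatively, and more uniformly, one can argue abstractly: $E_f$ is $O(3)$-invariant, hence invariant under the cubic point group of each lattice, so $\nabla E_f$ lies in the fixed subspace of that group acting on the five-dimensional space of volume-preserving deformations; since this representation contains no trivial summand for any of the three cubic lattices, the gradient must vanish, settling all three cases at once.
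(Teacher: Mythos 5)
Your proof is correct, and for $\Z^3$ and $D_3$ it follows essentially the paper's own route: the symmetries \eqref{scubicsym} kill all five derivatives at $\Z^3$ (one caution: your closed form $CS\bigl(u-2u^{-2}\bigr)$ for $\partial_u$ is legitimate only at $u=2^{1/3}$, since the equality of the three sums $\sum m^2f'$, $\sum n^2f'$, $\sum p^2f'$ requires the permutation symmetry of the form, which fails for other $u$ --- but that is where you evaluate, so the conclusion stands), and at $D_3$ you use Lemma \ref{automorphs} with only cosmetic differences in bookkeeping: you reduce $\partial_u$ via the identity $(m+\tfrac12 p)^2+(n+\tfrac12 p)^2+\tfrac12 p^2=R$ and \eqref{aut5}, and $\partial_v$ via \eqref{aut7}, \eqref{aut14}, \eqref{aut5}, where the paper instead observes $\partial_u E_f[D_3]=-\partial_v E_f[D_3]$ and invokes \eqref{aut1}. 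Your handling of $\partial_z$ by \eqref{aut3} combined with the exchange \eqref{exmn} is actually more robust than the paper's citation of \eqref{aut4}: as printed, \eqref{aut4} reads $\sum n(2n+p)F(R)=0$, which is inconsistent with \eqref{aut7} and \eqref{aut14} (it equals $\tfrac23\sum RF(R)$) and is evidently a typo for $\sum p(2n+p)F(R)=0$ --- the quantity your argument establishes directly. The genuine divergence is $D_3^*$, and you correctly identified it as the delicate case: the paper does no computation there, deducing criticality of $D_3^*$ from that of $D_3$ in one line via the Poisson summation formula together with the (unverified) assertion that the Fourier transform behaves as an isomorphism on $\mathcal{F}$. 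Both of your substitutes work and are self-contained: your coordinates $(2^{-1/3},1,0,1/2,1/2)$ for $D_3^*$ and the form $R^*=m^2+n^2+\tfrac34 p^2+mp+np$ are correct, and re-deriving the handful of Ennola-style automorph identities for $R^*$ is elementary though laborious; your representation-theoretic alternative is cleaner and strictly stronger, since $E_f$, viewed as a function of the Gram matrix, is invariant under $Q\mapsto A^TQA$ for $A\in GL_3(\Z)$, each cubic lattice is fixed by its octahedral automorph group, and the induced action on the five-dimensional space of volume-preserving symmetric deformations splits as a two- plus a three-dimensional irreducible with no trivial summand (the unique trivial piece, multiples of the Gram matrix itself, being exactly the direction removed by the volume constraint), so the invariant gradient vanishes at all three lattices at once --- this also explains conceptually why identities \eqref{aut1}--\eqref{aut4} must hold. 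Finally, the absolute-convergence estimate you lead with, using $|f'(r)|=O(r^{-5/2-\eta_1})$ and $Q_L\asymp|(m,n,p)|^2$, is the correct justification for all the reindexings and sign-flips, a point the paper leaves implicit.
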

\begin{proof}
Let $V>0$ be fixed. If $L=D_3=(1,1,0,1/2,1/2)$, then
\begin{align*}
&\partial_u E_f[D_3]=C\sum_{m,n,p}\left( -(m+p/2)^2-(n+p/2)^2+p^2 \right)f'\left(C R \right),\\
&\partial_v E_f[D_3]=C\sum_{m,n,p}\left(2(n+p/2)^2-p^2  \right)f'\left(C R  \right),\\
&\partial_x E_f[D_3]=C\sum_{m,n,p}n(2m+p)f'\left(C  R \right),\\
&\partial_y E_f[D_3]=C\sum_{m,n,p}p(2m+p)f'\left(C R \right),\\
&\partial_z E_f[D_3]=C\sum_{m,n,p}p(2n+p)f'\left(C R  \right).
\end{align*}
Thus, using \eqref{exmn}, we get
$$
\partial_u E_f[D_3]=-\frac{C}{2}\sum_{m,n,p}(4n^2+4np-p^2)f'(CR)=-\partial_v  E_f[D_3].
$$
Therefore, by \eqref{aut1}, we obtain $\partial_u E_f[D_3]=\partial_v  E_f[D_3]=0$. For the three other derivatives, we use respectively \eqref{aut2}, \eqref{aut3} and \eqref{aut4} and we get
$$
\partial_x E_f[D_3]=\partial_y E_f[D_3]=\partial_z E_f[D_3]=0.
$$
Thus, $D_3$ is a critical point of $E_f$ for any $f\in \mathcal{F}$. This implies the same for $D_3^*$, by the Poisson summation formula
$$
\sum_{p\in L} F(|p|)=|L|^{-1}\sum_{p\in L^*}\hat{F}(|p|),
$$
and the fact that the Fourier transform is an isomorphism on $\mathcal{F}$.\\
For the cubic lattice $\Z^3=(2^{1/3},1,0,0,0)$, we get, for any $V>0$ and any $f\in \mathcal{F}$,
\begin{align*}
&\partial_u E_f[\Z^3]=C\sum_{m,n,p}(-2^{-2/3}m^2-2^{-2/3}n^2+2^{1/3}p^2)f'(C I),\\
&\partial_v E_f[\Z^3]=2^{2/3}C\sum_{m,n,p} (n^2-p^2)f'(C I),\\
&\partial_x E_f[\Z^3]=2^{2/3}C\sum_{m,n,p}mn f'(C I),\\
&\partial_y E_f[\Z^3]=2^{2/3}C\sum_{m,n,p}mp f'(C I),\\
&\partial_z E_f[\Z^3]=2^{2/3}C\sum_{m,n,p}pn f'(C I).
\end{align*}
Using \eqref{scubicsym}, it is clear that all the partial derivatives are equal to $0$.
\end{proof}

\section{Second order derivatives}\label{sec2ndorder}

In the Appendix, we give the general formulas of all the second order derivatives of $E_f$ for any lattice $L=(u,v,x,y,z)$ with fixed volume $V$. In this part, we compute the second order derivatives of $E_f$ for $\Z^3$ and $D_3$.

\begin{prop}\label{2ndZ3}
For any fixed $V>0$ and any $f\in \mathcal{F}$, the second derivatives of $E_f$ at point $\Z^3$ are
\begin{align*}
&\partial_{uu}^2 E_f[\Z^3]=\frac{3}{2^{1/3}}C^2\sum_{m,n,p}(p^4-p^2n^2) f''(CI)+ 4C\sum_{m,n,p}p^2 f'(CI),\\
&\partial_{vv}^2 E_f[\Z^3]=2^{7/3}C^2\sum_{m,n,p}(p^4-n^2p^2) f''(CI) + 2^{8/3}C\sum_{m,n,p}p^2 f'(CI),\\
&\partial_{xx}^2 E_f[\Z^3]=\partial_{yy}^2 E_f[\Z^3]=\partial_{zz}^2 E_f[\Z^3]=2^{4/3}C^2\sum_{m,n,p}n^2p^2 f''(CI)+2^{2/3}C\sum_{m,n,p}p^2f'(CI),\\
&\partial_{uv}^2 E_f[\Z^3]=-2C^2\sum_{m,n,p}(p^4-n^2p^2)f''(CI)-3 \times 2^{1/3}C\sum_{m,n,p}p^2 f'(CI),\\
&\partial_{ux}^2 E_f[\Z^3]=\partial_{uy}^2 E_f[\Z^3]=\partial_{uz}^2 E_f[\Z^3]=\partial_{vx}^2 E_f[\Z^3]=\partial_{vy}^2 E_f[\Z^3]=\partial_{vz}^2 E_f[\Z^3]=\partial_{xy}^2 E_f[\Z^3]=\partial_{xz}^2 E_f[\Z^3]\\
&\quad\quad\quad\quad \hspace{1mm}=\partial_{yz}^2 E_f[\Z^3]=0.
\end{align*}
\end{prop}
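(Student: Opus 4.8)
The plan is to regard $E_f[\Z^3]$ as the value at $(u,v,x,y,z)=(2^{1/3},1,0,0,0)$ of the five-variable map $E_f(u,v,x,y,z,V)=\sum_{m,n,p}f(Q_L)$ and to differentiate twice under the summation sign. The decay conditions \eqref{defF} defining $\mathcal{F}$ guarantee that the twice-differentiated series converges absolutely and uniformly near $\Z^3$, so this term-by-term differentiation is legitimate and yields, for any two parameters $s,t\in\{u,v,x,y,z\}$,
\[
\partial^2_{st}E_f[L]=\sum_{m,n,p}\Big[f''(Q_L)\,\partial_s Q_L\,\partial_t Q_L+f'(Q_L)\,\partial^2_{st}Q_L\Big].
\]
Thus everything reduces to the gradient and Hessian of the single quadratic form $Q_L(m,n,p)$ in the five parameters: the first partials $\partial_s Q_L$ are precisely the bracketed factors recorded in the first-derivatives lemma, and the second partials $\partial^2_{st}Q_L$ come from one further differentiation of those explicit rational expressions. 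I would then specialise to $\Z^3$, where $m+xn+yp=m$, $n+zp=n$, $u^3/(2v^2)=1$ and $Q_{\Z^3}=CI$, so that each $\partial^2_{st}E_f[\Z^3]$ is a combination of lattice sums of the form $\sum(\text{monomial in }m,n,p)\,f''(CI)$ and $\sum(\text{monomial})\,f'(CI)$.

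The engine that collapses these sums is the cubic symmetry \eqref{scubicsym}, i.e. the invariance of $I$ under all permutations and sign changes of $(m,n,p)$. First, for the nine mixed pairs $st\in\{ux,uy,uz,vx,vy,vz,xy,xz,yz\}$ I would check that every monomial occurring in both the $f''$- and the $f'$-parts is odd in at least one of $m,n,p$; for example $\partial_u Q\,\partial_x Q$ and $\partial^2_{ux}Q$ are each proportional to $mn$ times an even factor, hence odd under $m\mapsto-m$, so all these sums vanish and $\partial^2_{st}E_f[\Z^3]=0$, reproducing the last displayed line. Second, for the surviving entries I would use the permutation part of \eqref{scubicsym}: every quadratic $f'$-sum reduces to a multiple of $\sum p^2 f'(CI)$ (since $\sum m^2 f'=\sum n^2 f'=\sum p^2 f'$), and every quartic $f''$-sum reduces to a combination of $\sum p^4 f''(CI)$ and $\sum n^2p^2 f''(CI)$ (using $\sum m^4 f''=\sum n^4 f''=\sum p^4 f''$ and $\sum m^2n^2 f''=\sum m^2p^2 f''=\sum n^2p^2 f''$). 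In the $(u,v)$-block these two combine into the single shape $\sum(p^4-n^2p^2)f''(CI)$, whereas the shear-diagonal entries $\partial^2_{xx},\partial^2_{yy},\partial^2_{zz}$—equal by the cubic symmetry permuting the three coordinate directions—reduce to the bare sum $\sum n^2p^2 f''(CI)$.

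The genuinely delicate part is the bookkeeping of the numerical prefactors, concentrated in the $(u,v)$-block. There the gradient entries at $\Z^3$, namely $\partial_u Q=C2^{-2/3}(-m^2-n^2+2p^2)$ and $\partial_v Q=2^{2/3}C(n^2-p^2)$ as read off from the proof of Proposition \ref{criticality}, must be squared and cross-multiplied and the resulting degree-four polynomials symmetrised. Because $\sum p^4 f''$ and $\sum n^2p^2 f''$ are genuinely \emph{independent}—unlike Ennola's form $R$, the cubic form $I$ admits no automorph relating a pure fourth power to a mixed square, so none of the identities \eqref{aut6}--\eqref{aut19} is available here—each coefficient has to be read off exactly, while simultaneously tracking the powers of two coming from $C=V^{2/3}2^{1/3}$ and $u=2^{1/3}$. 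I would carry this out coefficient by coefficient and then cross-check internal consistency: the criticality identities of Proposition \ref{criticality} force the coefficient sums inside $\partial_u Q$ and $\partial_v Q$ to vanish, which constrains the admissible combinations, and the expected coincidence $\partial^2_{xx}=\partial^2_{yy}=\partial^2_{zz}$ provides a further independent check on the shear block.
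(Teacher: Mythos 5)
Your proposal is correct and follows essentially the same route as the paper, whose entire proof consists of the remark that the computation is straightforward from Lemma~\ref{2nd} and \eqref{scubicsym}: you simply re-derive Lemma~\ref{2nd} by differentiating twice under the sum (legitimate by \eqref{defF}) and then carry out the same parity and permutation reductions, correctly noting along the way that Ennola's automorph identities play no role for the form $I$. One remark worth recording: if you actually execute the prefactor bookkeeping you rightly single out as delicate, you find $\partial^2_{uu}E_f[\Z^3]=\frac{3}{2^{1/3}}C^2\sum_{m,n,p}(p^4-p^2n^2)f''(CI)+3C\sum_{m,n,p}p^2f'(CI)$ and $\partial^2_{uv}E_f[\Z^3]=-3C^2\sum_{m,n,p}(p^4-n^2p^2)f''(CI)-3\cdot 2^{1/3}C\sum_{m,n,p}p^2f'(CI)$, rather than the printed coefficients $4C$ and $-2C^2$ --- a discrepancy one can confirm independently of the term-by-term arithmetic, since within the volume-constrained orthorhombic subfamily (in logarithmic variables $\ln u,\ln v$) the $S_3$ symmetry of the cubic point forces $\partial^2_{uv}E_f[\Z^3]=-2^{1/3}\,\partial^2_{uu}E_f[\Z^3]$ and $\partial^2_{vv}E_f[\Z^3]=\tfrac{4}{3}2^{2/3}\,\partial^2_{uu}E_f[\Z^3]$, relations satisfied by the corrected values (and by the printed $\partial^2_{vv}$ entry, which is correct) but violated by the statement as printed, so the consistency checks built into your plan would in fact detect two misprints in the proposition itself.
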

\begin{proof}
The computation is straightforward by using Lemma \ref{2nd} and \eqref{scubicsym}.
\end{proof}

We now use Lemma \ref{automorphs} in order to get the following expression of the second derivatives of the energy in terms of $R$ and $T$. This decomposition, already did by Ennola \cite{Ennola} in the case of the Epstein zeta function, will be very useful to study the optimality of $D_3$ for the theta function. We do not give the details of our computations here.

\begin{prop}\label{2ndD3}
For any fixed $V>0$ and any $f\in \mathcal{F}$, the second derivatives of $E_f$ at point $D_3$ are
\begin{align*}
&\partial_{uu}^2 E_f[D_3]=\frac{C^2}{2}\sum_{m,n,p} R^2 f''(CR)+2C\sum_{m,n,p} Rf'(CR)+6C^2\sum_{m,n,p} T f''(CR),\\
&\partial_{vv}^2 E_f[D_3]=\frac{5}{6}C^2 \sum_{m,n,p} R^2 f''(CR)+\frac{8}{3}C\sum_{m,n,p}Rf'(CR)+\frac{14}{3}C^2\sum_{m,n,p}T f''(CR),\\
&\partial_{xx}^2 E_f[D_3]=\frac{C^2}{3}\sum_{m,n,p}R^2 f''(CR)+C\sum_{m,n,p}R f'(CR) +\frac{4}{3}C^2\sum_{m,n,p} Tf''(CR),\\
&\partial_{yy}^2 E_f[D_3]=\partial_{zz}^2 E_f[D_3]=\frac{2}{3}C^2\sum_{m,n,p}R^2f''(CR) +\frac{4}{3}C\sum_{m,n,p} Rf'(CR)-\frac{8}{3}C^2\sum_{m,n,p} T f''(CR),\\
&\partial_{uv}^2 E_f[D_3]=-\frac{C^2}{2}\sum_{m,n,p}R^2 f''(CR)-2C\sum_{m,n,p}R f'(CR)-6C^2\sum_{m,n,p}T f''(CR),\\
&\partial_{xy}^2 E_f[D_3]=-\frac{C^2}{3}\sum_{m,n,p}R^2 f''(CR)-\frac{2C}{3}\sum_{m,n,p} R f'(CR)+\frac{4C^2}{3}\sum_{m,n,p} Tf''(CR),\\
&\partial_{ux}^2 E_f[D_3]=\partial_{uy}^2 E_f[D_3]=\partial_{uz}^2 E_f[D_3]=\partial_{vx}^2 E_f[D_3]=\partial_{vy}^2 E_f[D_3]=\partial_{vz}^2 E_f[D_3]=\partial_{xz}^2 E_f[D_3]\\
&\quad\quad\quad\quad \hspace{1mm} =\partial_{yz}^2 E_f[D_3]=0.
\end{align*}
\end{prop}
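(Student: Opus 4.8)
The plan is to start from the chain-rule expansion
$\partial^2_{ab}E_f[L]=\sum_{m,n,p}\big[f''(Q_L)\,\partial_a Q_L\,\partial_b Q_L+f'(Q_L)\,\partial^2_{ab}Q_L\big]$,
which is exactly what the Appendix formula (Lemma \ref{2nd}) records for a general $L=(u,v,x,y,z)$, and then to specialize to $L=D_3=(1,1,0,1/2,1/2)$, where $Q_{D_3}=CR$, so that $f'(Q_{D_3})=f'(CR)$ and $f''(Q_{D_3})=f''(CR)$. The whole computation then reduces to two steps: (i) writing down the polynomials $\partial_a Q_L$ and $\partial^2_{ab}Q_L$ evaluated at $D_3$, and (ii) collapsing the resulting lattice sums with Lemma \ref{automorphs}, exactly along the lines Ennola used for the Epstein zeta function.

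For step (i) I would evaluate the first-derivative lemma and its further $\partial$-derivatives at $D_3$. Writing $A=m+\tfrac12 p$ and $B=n+\tfrac12 p$ at $D_3$, the useful identities are $2A=2m+p$, $2B=2n+p$, $A^2+B^2=R-\tfrac12 p^2$ and $4B^2-2p^2=4n^2+4np-p^2$. These give the compact forms $\partial_u Q_L/C=-R+\tfrac32 p^2$, $\partial_v Q_L/C=\tfrac12(4n^2+4np-p^2)$, $\partial_x Q_L/C=n(2m+p)$, $\partial_y Q_L/C=p(2m+p)$, $\partial_z Q_L/C=p(2n+p)$, while the pure second derivatives $\partial^2_{ab}Q_L$ come out as the degree-two forms $2R$, $2B^2+3p^2$, $2n^2$, $2p^2$, and so on. It is precisely these combinations $2m+p$, $2n+p$, $4n^2+4np-p^2$ sitting on the left-hand sides of Lemma \ref{automorphs} that make Ennola's parametrization the right one.

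For step (ii) I would treat the surviving entries first. Expanding the degree-four products $\partial_a Q\,\partial_b Q$ and the degree-two forms $\partial^2_{ab}Q$ into monomials, every sum collapses to a linear combination of $\sum R^2 f''(CR)$, $\sum T f''(CR)$ and $\sum R f'(CR)$ by the ``nonzero'' identities: \eqref{aut5} and \eqref{aut18} handle $\partial^2_{uu}$; \eqref{aut6} handles $\partial^2_{vv}$; \eqref{aut8}, \eqref{aut9}, \eqref{aut10} handle $\partial^2_{xx},\partial^2_{yy},\partial^2_{zz}$; \eqref{aut15} handles $\partial^2_{xy}$; and \eqref{aut1} together with \eqref{aut19} handle $\partial^2_{uv}$, the various $f'$-pieces being reduced by \eqref{aut5}, \eqref{aut7} and \eqref{aut14}. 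A convenient internal check is that the formulas force $\partial^2_{uv}E_f[D_3]=-\partial^2_{uu}E_f[D_3]$, mirroring the relation $\partial_u E_f[D_3]=-\partial_v E_f[D_3]$ already used in Proposition \ref{criticality}.

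I expect the main obstacle to lie in verifying the vanishing of the eight off-diagonal entries rather than in the surviving ones. For the pairs $(v,x),(v,y),(v,z),(x,z),(y,z)$ this is painless: the relevant $\partial^2_{ab}Q$ vanishes, and the product $\partial_a Q\,\partial_b Q$ is exactly one of the ``odd'' forms killed by \eqref{aut11}--\eqref{aut13}, \eqref{aut16}, \eqref{aut17} (using \eqref{exmn} to trade $p(2n+p)$ for $p(2m+p)$ when needed). The genuinely fiddly cases are $(u,x),(u,y),(u,z)$, because the $\tfrac32 p^2$ term in $\partial_u Q_L/C=-R+\tfrac32 p^2$ is not annihilated by \eqref{aut2}--\eqref{aut4} alone: one is left having to show separately that $\sum p^2 n(2m+p)f''(CR)=0$ and $\sum p^3(2m+p)f''(CR)=0$. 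I would get the first by writing $n=\tfrac12[(2n+p)-p]$ to reduce it to \eqref{aut17} together with the second, and the second from the extra automorph $(m,n,p)\mapsto(-m-p,-n-p,p)$ of $R$, which yields $\sum m p^3 f''(CR)=-\tfrac12\sum p^4 f''(CR)$ and hence the needed cancellation. So the real work is bookkeeping: matching each expanded monomial to the correct automorph identity and supplying the one or two auxiliary symmetry relations that kill the residual $p$-heavy sums.
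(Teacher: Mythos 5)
Your proposal is correct and follows exactly the route the paper intends (and leaves implicit, since it omits the details): specialize Lemma \ref{2nd} at $D_3$, where $Q_{D_3}=CR$, and collapse every resulting sum via the automorph identities of Lemma \ref{automorphs} as Ennola did --- including your correct observation that the $(u,x),(u,y),(u,z)$ entries are \emph{not} covered by the listed identities alone, and your valid fix via the extra automorph $(m,n,p)\mapsto(-m-p,-n-p,p)$ of $R$, which indeed gives $\sum_{m,n,p} p^3(2m+p)f''(CR)=0$ and then, combined with \eqref{aut17}, $\sum_{m,n,p} np^2(2m+p)f''(CR)=0$. One small slip in your narrative: for the pair $(v,z)$ the mixed derivative $\partial^2_{vz}Q_L$ does \emph{not} vanish at $D_3$ (Lemma \ref{2nd} carries the $f'$-term $\frac{4Cv}{u}\sum_{m,n,p} p(n+zp)f'(Q_L)$), but the resulting sum $2C\sum_{m,n,p} p(2n+p)f'(CR)$ is annihilated by \eqref{aut4}, exactly as in the criticality proof, so all stated conclusions stand.
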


\section{Theta function and completely monotone potentials}\label{sectheta}

Ennola \cite{Ennola} proved that $D_3$ is a local minimizer for the Epstein zeta function, at any fixed density and for any $s>0$. A first consequence, by duality \eqref{fcteqzeta}, is the same result for $D_3^*$.\\

In this part, we consider the potential defined by $f_\alpha(r)=e^{-\alpha r}$, $\alpha>0$. The associated energy is the so-called theta function
$$
\theta_L(\alpha):=E_{f_\alpha}[L]+1=\sum_{p\in L}e^{-\alpha |p|^2}.
$$
As we recalled in \cite{BetTheta15}, there is a strong relation between $\theta_L$ and $E_f[L]$ if $f$ is completely monotone, i.e. for any $r>0$ and any $k\in\N$, $(-1)^k f^{(k)}(r)>0$. By Hausdorff-Bernstein-Widder theorem \cite{Bernstein}, $f$ is completely monotone if and only if $f$ is the Laplace transform of a finite positive measure $\mu$, and therefore
$$
E_f[L]=\int_0^{+\infty}\left(\theta_L(t)-1\right)d\mu(t).
$$
Consequently, any optimality for any $\alpha>0$, of a lattice $L_0$, for $L\mapsto \theta_L(\alpha)$ give the same optimality for $E_f$ if $f$ is a completely monotone function $f$. In dimension $d\in\{2,4,8,24\}$, Cohn and Kumar \cite{CohnKumar} conjectured that the same lattice is the unique minimizer of $L\mapsto \theta_L(\alpha)$, at any fixed density. In dimension $d=3$, it turns out that this ``universality" is not true, as explained by Sarnak and Str\"ombergsson in \cite{SarStromb}. The goal of the following study is to prove the instability of $\Z^3$ and the local (non-)minimality of $D_3$ and $D_3^*$ with respect to the value of the parameter $\alpha>0$.

\subsection{Non-optimality of simple cubic lattice}
As already understood by Born \cite{born1940}, the Simple Cubic lattices are all unstable, in particular for every completely monotone potential (Gaussian interaction, inverse power laws, etc.).
\begin{prop}\label{Z3saddleptLJ}
Let $f_\alpha(r)=e^{-\alpha r}$. For any $\alpha>0$ and any $V>0$, $\Z^3$ is a saddle point of $L\mapsto E_{f_\alpha}[L]$ among Bravais lattices of fixed volume $V$. Thus, $\Z^3$ is a saddle point of $L\mapsto E_{f}[L]$ among Bravais lattices of fixed volume $V$ if $f\in \mathcal{F}$ is completely monotone.
\end{prop}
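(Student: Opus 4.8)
The plan is to exploit the block structure of the Hessian recorded in Proposition \ref{2ndZ3}. Since $\Z^3=(2^{1/3},1,0,0,0)$ is already a critical point (Proposition \ref{criticality}), it suffices to exhibit one deformation direction along which the second derivative of $E_{f_\alpha}$ is strictly positive and one along which it is strictly negative. Proposition \ref{2ndZ3} shows that the Hessian decouples into the shear block spanned by $x,y,z$ (diagonal, with three equal entries $\partial^2_{xx}E_{f_\alpha}[\Z^3]$) and the block spanned by $u,v$, all cross-derivatives between the two groups vanishing. I will produce the negative direction inside the shear block and the positive direction inside the $(u,v)$ block. The crucial preliminary step is that every lattice sum at $\Z^3$ factorizes through the one-dimensional theta function $\vartheta(\beta):=\sum_{m\in\Z}e^{-\beta m^2}$: indeed $Q_{\Z^3}=CI=\alpha^{-1}\beta(m^2+n^2+p^2)$ with $\beta:=\alpha C2^{-1/3}=\alpha V^{2/3}$, and $f_\alpha'(CI)=-\alpha e^{-\alpha CI}$, $f_\alpha''(CI)=\alpha^2 e^{-\alpha CI}$, so that each sum reduces via $\sum_m m^2 e^{-\beta m^2}=-\vartheta'(\beta)$ and $\sum_m m^4 e^{-\beta m^2}=\vartheta''(\beta)$.

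Substituting these identities into the formula for $\partial^2_{xx}E_{f_\alpha}[\Z^3]$ and factoring gives
\[
\partial^2_{xx}E_{f_\alpha}[\Z^3]=2\beta\,\vartheta(\beta)\vartheta'(\beta)\bigl(\vartheta(\beta)+2\beta\vartheta'(\beta)\bigr).
\]
Here $\vartheta>0$ and $\vartheta'<0$, so the prefactor $2\beta\vartheta\vartheta'$ is negative and the sign of $\partial^2_{xx}$ is opposite to that of $\vartheta+2\beta\vartheta'$. To see $\vartheta+2\beta\vartheta'>0$ I would use the Jacobi functional equation $\vartheta(\beta)=\sqrt{\pi/\beta}\,\vartheta(\pi^2/\beta)$: setting $G(\beta):=\sqrt{\beta}\,\vartheta(\beta)=\sqrt{\pi}\,\vartheta(\pi^2/\beta)$, the left expression gives $G'(\beta)=\tfrac{1}{2\sqrt\beta}(\vartheta+2\beta\vartheta')$ while the right gives $G'(\beta)=-\sqrt\pi\,\pi^2\beta^{-2}\vartheta'(\pi^2/\beta)>0$ since $\vartheta'<0$. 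Hence $\vartheta+2\beta\vartheta'>0$ and $\partial^2_{xx}E_{f_\alpha}[\Z^3]<0$ for every $\alpha,V>0$: the shear direction is negative, simultaneously for all parameters.

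For the positive direction I would not diagonalize the full $(u,v)$ block, but test it against the volume-preserving tetragonal distortion $L_s$ with basis $a s^{-1}e_1,\,a s^{-1}e_2,\,a s^{2}e_3$ (where $a=V^{1/3}$), whose energy is $\Theta(s)=\vartheta(\beta s^{-2})^2\vartheta(\beta s^4)$. A direct differentiation confirms $\Theta'(1)=0$ (consistent with criticality) and yields
\[
\Theta''(1)=24\,\beta\,\vartheta(\beta)^3\,\frac{d}{d\beta}\!\left(\frac{\beta\vartheta'(\beta)}{\vartheta(\beta)}\right),
\]
so the tetragonal second derivative is positive exactly when $\Phi(\beta):=\beta\vartheta'(\beta)/\vartheta(\beta)$ is strictly increasing. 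This is the crux and the step I expect to be the main obstacle: $\Phi$ is a ratio in which $\vartheta'<0$ and the log-convexity term $\vartheta\vartheta''-(\vartheta')^2>0$ compete, and no single contribution dominates (term by term the bracket $\tfrac{\beta}{2}(m^2-n^2)^2-\tfrac12(m^2+n^2)$ changes sign). My plan to overcome it is twofold. First, the same functional equation gives the clean reflection identity $\Phi(\beta)+\Phi(\pi^2/\beta)=-\tfrac12$, hence $\Phi'(\beta)=\pi^2\beta^{-2}\Phi'(\pi^2/\beta)$, so $\Phi'$ keeps a constant sign across dual points and it suffices to prove $\Phi'>0$ on $[\pi,\infty)$. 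Second, on that range I would invoke the monotonicity and convexity estimates for the one-dimensional theta function of Faulhuber and Steinerberger \cite{Faulhuber:2016aa} (or, alternatively, an elementary tail estimate, since for $\beta\ge\pi$ the series $\vartheta$ is strongly dominated by its terms $m\in\{0,\pm1\}$) to close the argument. This makes $\Theta''(1)>0$ for all $\alpha,V>0$.

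Together, the shear direction ($\partial^2_{xx}<0$) and the tetragonal direction ($\Theta''(1)>0$) show that the Hessian of $E_{f_\alpha}$ at $\Z^3$ is indefinite, so $\Z^3$ is a saddle point of $L\mapsto\theta_L(\alpha)$ for every $\alpha,V>0$. To pass to an arbitrary completely monotone $f\in\mathcal{F}$ I would use the integral representation $E_f[L]=\int_0^{+\infty}(\theta_L(t)-1)\,d\mu(t)$ with $\mu$ a nonzero positive measure, so that the second derivative of $E_f$ in any fixed direction is the $\mu$-integral of the corresponding second derivative of $\theta_L(t)$. Since the shear second derivative is negative and the tetragonal one positive for every $t>0$, with signs independent of $t$, both signs survive integration against $\mu$; hence $\Z^3$ is a saddle point of $E_f$ as well. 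The only genuinely delicate input in this whole scheme is the strict monotonicity of $\Phi(\beta)=\beta\vartheta'(\beta)/\vartheta(\beta)$; everything else reduces to the factorization at $\Z^3$ and two applications of the Jacobi functional equation.
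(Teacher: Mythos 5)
Your proposal is correct and follows essentially the same route as the paper: criticality plus the block Hessian of Proposition \ref{2ndZ3}, the negative shear direction $\partial^2_{xx}E_{f_\alpha}[\Z^3]<0$ obtained through the Jacobi functional equation (your inequality $\vartheta(\beta)+2\beta\vartheta'(\beta)>0$ is exactly the identity \cite[Fact 2]{Faulhuber:2016aa} applied at $s=\beta/\pi$), and a positive direction in the $(u,v)$-block, which in your tetragonal parametrization is precisely the $u$-axis, so $\Theta''(1)>0$ is equivalent to the paper's $\partial^2_{uu}E_{f_\alpha}[\Z^3]>0$. The one step you flag as the main obstacle is already closed by your own citation: $\Phi'(\beta)>0$ unfolds to $\vartheta(\beta)\vartheta''(\beta)-\vartheta'(\beta)^2>-\vartheta(\beta)\vartheta'(\beta)/\beta$, which is verbatim the refined logarithmic convexity \cite[Th. 2.3]{Faulhuber:2016aa} that the paper invokes, valid for all $\beta>0$, so neither the reduction to $[\pi,\infty)$ nor a tail estimate is needed.
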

\begin{proof} We write $f=f_\alpha$ and we use Proposition \ref{2ndZ3}. We first prove the following result: for any $\alpha>0$,
$$
\partial_{xx}^2 E_f[\Z^3]=\partial_{yy}^2 E_f[\Z^3]=\partial_{zz}^2 E_f[\Z^3]<0.
$$
Letting $\beta=C\alpha 2^{-1/3}$, we have
\begin{align*}
\partial_{xx}^2 E_f[\Z^3] &=2^{4/3}C^2\alpha^2\sum_{m,n,p}n^2p^2e^{-\alpha C I}-2^{2/3}C\alpha \sum_{m,n,p}p^2 e^{-\alpha C I}\\
&=2^{2/3}C\alpha\sum_{m,n,p}\left( 2^{2/3}C \alpha n^2p^2-p^2 \right)e^{-\alpha C I}\\
& =2\beta\sum_{m,n,p}\left(2\beta n^2p^2-p^2  \right)e^{-\alpha C I}\\
&=2\beta \left(\sum_p p^2 e^{-\beta p^2}\right)^2 \left( \sum_n (2\beta n^2-1)e^{-\beta n^2} \right).
\end{align*}
We now apply \cite[Lemma (Fact 2)]{Faulhuber:2016aa} which states that, for any $s>0$,
\begin{equation}\label{FS1}
s\frac{\theta_3'(s)}{\theta_3(s)}+\frac{1}{s}\frac{\theta_3'(1/s)}{\theta_3(1/s)}=-1/2,
\end{equation}
where $\displaystyle \theta_3(s):=\sum_{k\in \Z} e^{-\pi k^2 s}$ is the classical one-dimensional theta function. Then, applying \eqref{FS1} to $s=\beta/\pi$, we get, as $\theta_3'(1/s)<0$,
\begin{equation}\label{UseFS1}
2\beta \frac{\sum_n n^2 e^{-\beta n^2}}{\sum_n e^{-\beta n^2}}<1,
\end{equation}
which is equivalent with $\sum_n (2\beta n^2-1)e^{-\beta n^2}<0$, and that proves $\partial_{xx}^2 E_f[\Z^3]<0$.\\
Let us now prove that $\partial_{uu}^2 E_f[\Z^3]>0$. Letting $\beta=C\alpha 2^{-1/3}$, we have
\begin{align*}
&\partial_{uu}^2 E_f[\Z^3]=\frac{3}{2^{1/3}}C^2\alpha^2\sum_{m,n,p} (p^4-p^2n^2)e^{-\alpha CI}-4C\alpha\sum_{m,n,p}p^2 e^{-\alpha CI}\\
&=2^{1/3}\beta\left( \sum_m e^{-\beta m^2} \right)\sum_{n,p}\left(3\beta(p^4-p^2n^2)-p^2  \right)e^{-\beta n^2}e^{-\beta p^2}.
\end{align*}
Now, we expand the third factor and we write it in terms of $\theta_3$:
\begin{align*}
&\sum_{n,p}\left(3\beta(p^4-p^2n^2)-p^2  \right)e^{-\beta n^2}e^{-\beta p^2}=\frac{3\beta}{\pi^2}\theta_3''(\beta/\pi)\theta(\beta/\pi)-\frac{3\beta}{\pi^2}\left( \theta_3'(\beta/\pi) \right)^2+\frac{1}{\pi}\theta_3(\beta/\pi)\theta_3'(\beta/\pi).
\end{align*}
The ``refined logarithmic convexity" \cite[Th. 2.3]{Faulhuber:2016aa} states that, for any $s>0$,
$$
\theta_3''(s)\theta_3(s)-\theta_3'(s)^2>-\frac{\theta_3'(s)\theta_3(s)}{s}>0.
$$
We apply it for $s=\beta/\pi$ and we find
$$
\sum_{n,p}\left(3\beta(p^4-p^2n^2)-p^2  \right)e^{-\beta n^2}e^{-\beta p^2}>-\frac{2}{\pi}\theta_3'(\beta/\pi)\theta_3(\beta/\pi)>0.
$$
It follows that $\partial_{uu}^2 E_f[\Z^3]>0$. The same arguments work to prove $\partial_{vv}^2 E_f[\Z^3]>0$. 
\end{proof}

\begin{remark}
Even though we are studying a toy static model for interaction on condensed matter, the fact that there are few elements (actually, only the Polonium) that can have a Simple Cubic structure in its solid state is not surprising. 
\end{remark}

\subsection{Optimality of FCC and BCC with respect to $\alpha$}

In \cite[Thm. 1.7]{BeterminPetrache}, we proved that $D_3$ (resp. $D_3^*$) is not local a minimizer of $L\mapsto \theta_L(\alpha)$  if $\alpha$ is enough small (resp. enough large). Furthermore, we showed that these lattices are local minimizers for some values of $\alpha$ (we computed it for only a finite number of values of $\alpha$). For that, we used both Montgomery \cite{Mont} and Baernstein \cite{baernstein} results combined with our minimization result among body-centred-orthorhombic lattices. In this part, we show that the decomposition we proved in Proposition \ref{2ndD3} allows to prove the local minimality of $D_3$ (resp. $D_3^*$ ) for enough large (resp. enough small) values of $\alpha$.\\

We begin by a lemma which is the analogue of \cite[Lemma 5]{Ennola} for the theta function.

\begin{lemma}
For any $\beta>0$,
\begin{equation}\label{sumTR}
\sum_{m,n,p} T e^{-\beta R}>0.
\end{equation}
\end{lemma}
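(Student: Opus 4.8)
The plan is to turn this lattice sum into a manifestly nonnegative Fourier integral by exposing a hidden four‑fold symmetry of the pair $(R,T)$. The starting point is the algebraic identity
\[
2R(m,n,p)=m^2+n^2+(m+p)^2+(n+p)^2,\qquad T(m,n,p)=mn(m+p)(n+p),
\]
which one verifies by direct expansion (recall $R=m^2+n^2+p^2+mp+np$). This invites the change of variables $k_1=m$, $k_2=n$, $k_3=m+p$, $k_4=n+p$. As $(m,n,p)$ ranges over $\Z^3$, the quadruple $(k_1,k_2,k_3,k_4)$ ranges bijectively over the integer solutions of the single linear constraint $k_1+k_4=k_2+k_3$ (since $k_3-k_1=p=k_4-k_2$ recovers $p$, while $m=k_1$, $n=k_2$). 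Under this substitution the sum becomes
\[
\sum_{m,n,p}Te^{-\beta R}=\sum_{\substack{k_1,k_2,k_3,k_4\in\Z\\ k_1+k_4=k_2+k_3}}k_1k_2k_3k_4\,e^{-\frac{\beta}{2}(k_1^2+k_2^2+k_3^2+k_4^2)},
\]
the series converging absolutely thanks to the Gaussian weights.

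Next I would decouple the four variables by encoding the constraint with a character integral, using $\int_0^1 e^{2\pi i\theta\ell}\,d\theta=\delta_{\ell,0}$ for $\ell\in\Z$. Writing the indicator of $k_1+k_4-k_2-k_3=0$ as $\int_0^1 e^{2\pi i\theta(k_1+k_4-k_2-k_3)}\,d\theta$ and interchanging sum and integral (legitimate by absolute convergence), the fourfold sum factors completely. Setting
\[
\phi(\theta):=\sum_{k\in\Z}k\,e^{-\frac{\beta}{2}k^2}e^{2\pi i\theta k},
\]
the $k_1$‑ and $k_4$‑sums each produce $\phi(\theta)$, while the $k_2$‑ and $k_3$‑sums each produce $\phi(-\theta)=\overline{\phi(\theta)}$ (the Fourier coefficients $k\,e^{-\beta k^2/2}$ being real). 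Hence
\[
\sum_{m,n,p}Te^{-\beta R}=\int_0^1\phi(\theta)^2\,\overline{\phi(\theta)}^2\,d\theta=\int_0^1|\phi(\theta)|^4\,d\theta .
\]

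The integrand is nonnegative, so the sum is $\geq 0$, and strict positivity follows because $\phi$ is not identically zero (its coefficients $k\,e^{-\beta k^2/2}$ do not all vanish), whence $|\phi|^4>0$ on a set of positive measure. Equivalently, since $\phi=\tfrac{1}{2\pi i}\Theta'$ with $\Theta(\theta)=\sum_k e^{-\beta k^2/2}e^{2\pi i\theta k}$ real‑valued, one obtains the clean formula $\sum_{m,n,p}Te^{-\beta R}=(2\pi)^{-4}\int_0^1\Theta'(\theta)^4\,d\theta>0$. The only genuinely delicate points are verifying the algebraic identity together with the bijectivity of the substitution, and justifying the interchange of the summation with the character integral; the positivity is then immediate. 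I emphasize that this direct route is essential: the Mellin transform relation $Y(s)=\Gamma(s+2)^{-1}\int_0^\infty t^{s+1}\big(\sum Te^{-tR}\big)\,dt$ shows only that Ennola's $Y(s)>0$ is a statement about moments of $\beta\mapsto\sum Te^{-\beta R}$, which does not by itself yield the pointwise positivity in $\beta$ needed here.
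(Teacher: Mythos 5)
Your proof is correct, and it takes a genuinely different route from the paper's. The paper's proof is two lines long but leans entirely on Ennola's Lemma 5, which asserts positivity of the partial sums $A(t)=\sum_{R\leq t}T$ for all integers $t\geq 2$; it then concludes by Abel summation, $\sum_{m,n,p}Te^{-\beta R}=\sum_{t\geq 2}A(t)\bigl(e^{-\beta t}-e^{-\beta(t+1)}\bigr)>0$. Your argument is instead self-contained: the identity $2R=m^2+n^2+(m+p)^2+(n+p)^2$ with $T=mn(m+p)(n+p)$ checks out, the substitution $(k_1,k_2,k_3,k_4)=(m,n,m+p,n+p)$ is indeed a bijection of $\Z^3$ onto $\{k\in\Z^4:\;k_1+k_4=k_2+k_3\}$ (the origin, where $T=0$, is harmless even though the paper's sums exclude it), the character-integral unfolding is justified by the absolute convergence you invoke, and the resulting representation $\sum_{m,n,p}Te^{-\beta R}=(2\pi)^{-4}\int_0^1\Theta'(\theta)^4\,d\theta$, with $\Theta'$ real, continuous and not identically zero, gives strict positivity. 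As a sanity check, the $\beta\to\infty$ asymptotics of your integral, $(2\pi)^{-4}(4\pi)^4e^{-2\beta}\int_0^1\sin^4(2\pi\theta)\,d\theta=6e^{-2\beta}$, matches a direct count: $T$ vanishes on the twelve $R=1$ vectors (each has a zero coordinate in the $k$-variables), while the six constrained all-$\pm1$ quadruples at $R=2$ each carry $T=1$. What each approach buys: the paper's route is short but inherits the nontrivial combinatorial work behind Ennola's lemma; yours bypasses that lemma entirely and produces a quantitative, manifestly positive formula which, via the Mellin relation you state, also recovers Ennola's $Y(s)>0$ for all $s>0$ — so it proves strictly more. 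Your closing remark is apt as well: one could not have simply cited $Y(s)>0$, since positivity of the Mellin moments of $\beta\mapsto\sum Te^{-\beta R}$ does not imply the pointwise positivity in $\beta$ that the lemma requires.
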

\begin{proof}
We use \cite[Lemma 5]{Ennola} which states that, for any $t\in \N\backslash \{0,1\}$,
$$
A(t):=\sum_{m,n,p\atop R(m,n,p)\leq t} T(m,n,p)>0.
$$
Then, as in \cite[Lemma 10]{Ennola} we get
$$
\sum_{m,n,p} Te^{-\beta R}=\sum_{t=2}^{+\infty} (A(t)-A(t-1))e^{-\beta t}=\sum_{t=2}^{+\infty} A(t)\left( e^{-\beta t}-e^{-\beta(t+1)} \right)>0
$$
because $A(1)=0$.
\end{proof}

Now, using this lemma and our decomposition in Proposition \ref{2ndD3} in terms of $R$ and $T$ of the second derivatives, we prove the local optimality of $D_3$ for extremal values of $\alpha$. The local optimality of $D_3^*$, obtained by duality, will be stated in a corollary below. We recall that, by the scaling property of the theta function, it is sufficient to state the following result for Bravais lattices of unit volume.

\begin{prop}\label{locmintheta}
There exist $\alpha_0$ and $\alpha_1$ such that:
\begin{enumerate}
\item for any $0<\alpha<\alpha_0$, $D_3$ is a saddle point of $E_{f_\alpha}$ among Bravais lattices of fixed volume $1$;
\item for any $\alpha>\alpha_1$, $D_3$ is a local minimizer of $E_{f_\alpha}$ among Bravais lattices of fixed volume $1$.
\end{enumerate}
\end{prop}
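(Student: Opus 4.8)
The plan is to exploit the Hessian formulas from Proposition \ref{2ndD3}, which express all second derivatives at $D_3$ as linear combinations of the three lattice sums
\begin{equation}\label{plan-sums}
S_0:=\sum_{m,n,p} R\,f_\alpha'(CR),\quad S_1:=\sum_{m,n,p} R^2 f_\alpha''(CR),\quad S_2:=\sum_{m,n,p} T\,f_\alpha''(CR),
\end{equation}
with $f_\alpha(r)=e^{-\alpha r}$, so $f_\alpha'(CR)=-\alpha e^{-\alpha C R}$ and $f_\alpha''(CR)=\alpha^2 e^{-\alpha C R}$. Since $V=1$ gives $C=2^{1/3}$, writing $\beta:=C\alpha=2^{1/3}\alpha$ turns every sum into a function of $\beta$ alone. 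The strategy for local minimality (point 2) is to show that the Hessian is positive definite for large $\beta$, and the strategy for the saddle point (point 1) is to exhibit one negative Hessian entry (or a direction of negative second derivative) for small $\beta$. The key structural fact I would use is \eqref{sumTR}, giving $\sum_{m,n,p}T e^{-\beta R}>0$ for all $\beta>0$, hence $S_2>0$.

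First I would analyze the large-$\alpha$ (equivalently large-$\beta$) regime. As $\beta\to\infty$ the sums in \eqref{plan-sums} are dominated by the shortest vectors, i.e. the terms with the minimal value $R=2$ (the minimum of $R$ over $\Z^3\setminus\{0\}$), and I would compute the leading asymptotics of $S_0,S_1,S_2$ by isolating that shell. Because $f_\alpha''(CR)=\alpha^2 e^{-\alpha C R}$ carries an extra factor $\alpha^2$ relative to $f_\alpha'(CR)=-\alpha e^{-\alpha C R}$, the curvature terms $S_1,S_2$ (which are positive) dominate the gradient-type term $S_0$ for large $\alpha$. I would then plug these asymptotics into the six independent Hessian entries of Proposition \ref{2ndD3} and check the principal minors: the diagonal entries are governed by positive combinations of $S_1$ and $S_2$ (using $S_2>0$ from \eqref{sumTR}), and the off-diagonal couplings $\partial_{uv}$ and $\partial_{xy}$ need to be controlled against the diagonal. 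The block structure is favorable, since $u,v$ decouple from $x,y,z$ (all mixed derivatives $\partial_{ux}$, etc. vanish), so it suffices to verify positive definiteness of the $2\times 2$ block in $(u,v)$ and of the symmetric $3\times 3$ block in $(x,y,z)$ separately, which reduces to a handful of scalar inequalities among $S_0,S_1,S_2$ that hold once $\alpha$ exceeds some threshold $\alpha_1$.

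For point 1, the small-$\alpha$ regime, I would look for a single direction of instability. The natural candidate is to compare $D_3$ with $D_3^*$ via duality: since $\theta_L$ and $\theta_{L^*}$ are related by the functional equation and $D_3^*$ is the dual of $D_3$, and since Proposition \ref{locmintheta} in the large-$\alpha$ case will give local minimality of the dual at small $\alpha$, one expects $D_3$ itself to fail to be a minimizer for small $\alpha$. Concretely, I would evaluate, say, the $(x,y,z)$-block as $\beta\to 0$, where the Euler--Maclaurin / Poisson behavior of the sums changes sign relative to the large-$\beta$ regime, producing a negative eigenvalue; exhibiting one such eigenvalue while (from criticality, Proposition \ref{criticality}) the gradient vanishes establishes the saddle point. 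Since the statement only asserts existence of thresholds $\alpha_0,\alpha_1$ and not their equality, a continuity/sign-change argument suffices: I would show the relevant determinant or entry is positive for large $\beta$ and negative for small $\beta$.

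The main obstacle I anticipate is the small-$\alpha$ analysis for point 1. For large $\alpha$ everything is controlled by the nearest-neighbor shell and the positivity $S_2>0$ from \eqref{sumTR} does most of the work, so that direction is essentially a bookkeeping of asymptotics. For small $\alpha$, however, the sums are not dominated by a single shell and all shells contribute comparably; obtaining a clean sign for at least one Hessian entry requires either a Poisson-summation transformation (trading $\theta_L(\alpha)$ for $\alpha^{-3/2}\theta_{L^*}(\pi^2/\alpha)$ and reusing the large-parameter analysis on the dual) or a careful expansion. Invoking the duality \eqref{fcteqzeta}-style relation to transfer the small-$\alpha$ behavior of $D_3$ to the large-parameter behavior of $D_3^*$ is, I expect, the cleanest route, and it is precisely the device the subsequent corollary on $D_3^*$ will formalize.
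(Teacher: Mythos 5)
Your treatment of point 2 coincides, in substance, with the paper's proof: reduce everything via Proposition \ref{2ndD3} to the three sums $A_1=\sum R^2e^{-\beta R}$, $A_2=\sum R\,e^{-\beta R}$, $A_3=\sum T\,e^{-\beta R}$ (your $S_1,S_0,S_2$ up to factors of $-\alpha$ and $\alpha^2$), use \eqref{sumTR} for $A_3>0$ together with \eqref{ineqRT} for $A_1>4A_3$ (which you will need explicitly for the entry $\partial_{zz}^2E_{f_\alpha}[D_3]$, where the combination is $R^2-4T$ and the positivity of $A_3$ alone does not help), exploit the vanishing of all mixed derivatives except $\partial_{uv}$ and $\partial_{xy}$, and verify the diagonal entries \eqref{theta1}--\eqref{theta3} and the two $2\times2$ minors for large $\beta$, where the $f''$-terms carry an extra factor of $\beta$ and the shell $R=1$ dominates. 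That half is fine.

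The genuine gap is in point 1. Your principal route --- Poisson-dualize and ``reuse the large-parameter analysis on the dual'' --- is circular as stated: the input it requires is that $D_3^*$ is a saddle point of the theta function at \emph{large} parameter, and in the paper that statement is Corollary \ref{corlocmintheta}, deduced \emph{from} point 1 of the present proposition; nothing proved before it (in particular not point 2, which concerns $D_3$) controls the Hessian at $D_3^*$. Moreover your motivating heuristic is a non sequitur: local minimality of $D_3^*$ at small $\alpha$ in no way implies that $D_3$ ``fails to be a minimizer'' there --- two distinct lattices can simultaneously be local minimizers at the same $\alpha$, so no instability at $D_3$ follows. Separately, even granting a negative eigenvalue at small $\beta$, ``negative eigenvalue plus criticality'' only rules out a local minimum; to conclude \emph{saddle point} one must also exclude a local maximum, i.e.\ produce a direction of positive second variation, which your sketch never does. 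The paper's proof stays entirely at $D_3$ and uses no Poisson summation: it reads off the sign change in $\beta$ of the fixed combinations $\beta(R^2+12T)-4R$, $\beta(R^2+4T)-3R$, $\beta(R^2-4T)-2R$ inside \eqref{theta1}--\eqref{theta3}, and of the two determinants written as explicit polynomials in $\beta$ with coefficients $A_1,A_2,A_3$. Your instinct that the small-$\beta$ regime is the delicate part is correct --- all shells contribute there, and in fact the leading continuum (shape-independent) contributions cancel in every Hessian entry, so the small-$\beta$ signs are carried by the lattice corrections --- but the proposal as written does not close that regime: the duality transfer would have to be accompanied by an independent, direct analysis of the Hessian at $D_3^*$ at large parameter (in a parametrization adapted to the BCC form, which the machinery of Lemma \ref{automorphs}, built around $R$, does not supply), and the saddle conclusion needs both a negative and a positive direction.
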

\begin{proof}
We use Proposition \ref{2ndD3}. Indeed, we easily compute, with $\beta= C\alpha$,
\begin{align}
&\label{theta1}\partial_{uu}^2 E_{f_\alpha}[D_3]=\frac{\beta}{2}\sum_{m,n,p}\left[\beta(R^2+12T)-4R   \right]e^{-\beta R},\\
&\label{theta2}\partial_{xx}^2 E_{f_\alpha}[D_3]=\frac{\beta}{3}\sum_{m,n,p}\left[ \beta(R^2+4T)-3R \right]e^{-\beta R},\\
&\label{theta3}\partial_{zz}^2 E_{f_\alpha}[D_3]=\frac{2\beta}{3}\sum_{m,n,p}\left[\beta(R^2-4T)-2R  \right]e^{-\beta R}.
\end{align}
By \eqref{sumTR}, we have $\sum (R^2+12T)e^{-\beta R}>0$ and $\sum (R^2+4T)e^{-\beta R}>0$. Therefore, it is clear that \eqref{theta1} and \eqref{theta2} are negative for $\beta$ small enough and positive for $\beta$ large enough because their nearest-neighbours terms dominate the rest. Since we have \eqref{ineqRT}, it is the same for \eqref{theta3}.\\
Furthermore, defining the following positive quantities (since $R> 0$ and \eqref{sumTR}),
$$
A_1=\sum_{m,n,p} R^2 e^{-\beta R},\quad A_2=\sum_{m,n,p} R e^{-\beta R}, \quad\textnormal{and}\quad A_3=\sum_{m,n,p}T e^{-\beta R},
$$
we have
\begin{align*}
&\partial_{uu} E_{f_\alpha}[D_3]\partial_{vv}  E_{f_\alpha}[D_3]-\left(\partial_{uv} E_{f_\alpha}[D_3]\right)^2\\
&=\frac{1}{6}(A_1+12A_3)(A_1-4A_3)\beta^4-\frac{1}{3}\left(3A_1A_2+4A_2A_3  \right)\beta^3+\frac{4}{3}A_2^2 \beta^2,
\end{align*}
and
\begin{align*}
&\partial_{xx} E_{f_\alpha}[D_3]\partial_{yy}  E_{f_\alpha}[D_3]-\left(\partial_{xy} E_{f_\alpha}[D_3]\right)^2\\
&=\frac{1}{9}(A_1+12A_3)(A_1-4A_3)\beta^4-\frac{2}{9}\left(3A_1A_2+4A_2A_3  \right)\beta^3+\frac{8}{9}A_2^2 \beta^2.
\end{align*}
Therefore, as $4T<R^2$, we get $A_1>4A_3$ and, because all the $A_i$'s are of the same order with respect to $\beta$, if $\beta$ is enough large, then both above quantities are positive. Thus, we get the local minimality of $D_3$ for $\beta$ enough large. Moreover, if $\beta$ is enough small, these quantities are positive and $D_3$ is a saddle point.
\end{proof}

By duality, we get the same kind of result for the BCC lattice.
\begin{corollary}\label{corlocmintheta}
Let $\alpha_0$ and $\alpha_1$ be as in the previous proposition, then:
\begin{enumerate}
\item for any $\alpha>1/\alpha_0$, $D_3^*$ is a saddle point of $E_{f_\alpha}$ among Bravais lattices of fixed volume $1$;
\item for any $0<\alpha<1/\alpha_1$, $D_3^*$ is a local minimizer of $E_{f_\alpha}$ among Bravais lattices of fixed volume $1$.
\end{enumerate}
\end{corollary}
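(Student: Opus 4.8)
The plan is to obtain Corollary~\ref{corlocmintheta} from Proposition~\ref{locmintheta} by a pure duality argument, in the same spirit as the passage from $D_3$ to $D_3^*$ already used in the proof of Proposition~\ref{criticality}. The main ingredient is the Jacobi transformation formula for theta functions: applying the Poisson summation formula to the Gaussian $F(x)=e^{-\alpha|x|^2}$, whose Fourier transform is the positive multiple $\widehat{F}(\xi)=(\pi/\alpha)^{3/2}e^{-\pi^2|\xi|^2/\alpha}$ of a Gaussian, one gets for any lattice $L$ of volume $V=1$
\begin{equation*}
\theta_L(\alpha)=\left(\frac{\pi}{\alpha}\right)^{3/2}\theta_{L^*}\left(\frac{\pi^2}{\alpha}\right).
\end{equation*}
The structural point is that, among Bravais lattices of volume $1$, the dualization $\Phi\colon L\mapsto L^*$ is a smooth involutive diffeomorphism (it is essentially the map $Q\mapsto Q^{-1}$ on the associated volume-normalized quadratic forms, read in the coordinates $(u,v,x,y,z)$) which preserves the volume and exchanges $D_3$ and $D_3^*$.

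First I would fix $\alpha$ and rewrite the identity as $\theta_L(\alpha)=c(\alpha)\,\theta_{\Phi(L)}(\pi^2/\alpha)$ with the lattice-independent constant $c(\alpha)=(\pi/\alpha)^{3/2}>0$. Since $E_{f_\alpha}[L]=\theta_L(\alpha)-1$ and neither the additive constants nor the prefactor $c(\alpha)$ depend on the lattice, this shows that, near $D_3^*$,
\begin{equation*}
E_{f_\alpha}[L]=c(\alpha)\,\bigl(E_{f_{\pi^2/\alpha}}\circ\Phi\bigr)[L]+\bigl(c(\alpha)-1\bigr).
\end{equation*}
Because $D_3^*$ and $\Phi(D_3^*)=D_3$ are both critical points (Proposition~\ref{criticality}), the first-order terms in the chain rule disappear and the Hessian of $E_{f_\alpha}$ at $D_3^*$ equals $c(\alpha)\,J^{\mathsf T}H J$, where $H$ is the Hessian of $E_{f_{\pi^2/\alpha}}$ at $D_3$ and $J$ is the (invertible) Jacobian of $\Phi$ at $D_3^*$.

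I would then invoke Sylvester's law of inertia: congruence by the invertible matrix $J$ and multiplication by the positive scalar $c(\alpha)$ both preserve the signature of a symmetric matrix, hence send positive-definite Hessians to positive-definite ones and indefinite Hessians to indefinite ones. Therefore $D_3^*$ is a local minimizer (resp.\ a saddle point) of $E_{f_\alpha}$ exactly when $D_3$ is a local minimizer (resp.\ a saddle point) of $E_{f_{\pi^2/\alpha}}$. Proposition~\ref{locmintheta} states that $D_3$ is a local minimizer for parameters above $\alpha_1$ and a saddle point for parameters below $\alpha_0$; inverting the parameter via $\alpha\mapsto\pi^2/\alpha$ (the self-dual value being $\alpha=\pi$, consistently with the Sarnak--Str\"ombergsson threshold) swaps ``large'' and ``small'' and produces the two regimes with reciprocal thresholds, namely $D_3^*$ a saddle point for $\alpha$ above, and a local minimizer for $\alpha$ below, the dual transition, denoted $1/\alpha_0$ and $1/\alpha_1$ in the statement.

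The only genuinely delicate step is the Hessian transfer through $\Phi$: one must verify that $\Phi$ is indeed a diffeomorphism of the five-dimensional parameter space near $D_3^*$, so that $J$ is invertible and the congruence argument applies, and that the criticality of $D_3$ and $D_3^*$ (Proposition~\ref{criticality}) is exactly what makes the first-order contribution in the chain rule vanish so that only $c(\alpha)J^{\mathsf T}HJ$ survives. I expect this bookkeeping to be the main point, but it requires no new analytic estimate beyond Proposition~\ref{locmintheta}: the dualization of a Bravais lattice is a smooth and smoothly invertible operation on the lattice coordinates, and $\widehat{e^{-\alpha|\cdot|^2}}$ is a strictly positive multiple of a Gaussian, so that $c(\alpha)>0$.
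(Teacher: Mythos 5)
Your proposal is correct and takes essentially the same route as the paper, which disposes of the corollary in a single line (``By duality, we get the same kind of result for the BCC lattice''), the duality being exactly your Jacobi transformation $\theta_L(\alpha)=\left(\pi/\alpha\right)^{3/2}\theta_{L^*}(\pi^2/\alpha)$ from Poisson summation; your bookkeeping (dualization as a smooth involution on volume-$1$ lattices, vanishing of first-order chain-rule terms at critical points via Proposition~\ref{criticality}, and signature preservation under congruence by the invertible Jacobian) is precisely what that remark leaves implicit. The only cosmetic point is that duality literally produces thresholds $\pi^2/\alpha_0$ and $\pi^2/\alpha_1$ rather than $1/\alpha_0$ and $1/\alpha_1$, but since $\alpha_0,\alpha_1$ are non-explicit this is absorbed by renaming the constants --- an imprecision already present in the paper's own statement.
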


\section{Application to Lennard-Jones-type potentials}\label{secLJ}

In this part we consider the Lennard-Jones-type potentials defined on $(0,+\infty)$ by
\begin{equation}\label{LJpot}
f(r)=\frac{a_2}{r^{x_2}}-\frac{a_1}{r^{x_1}},\quad (a_1,a_2)\in(0,+\infty)^2,\quad \quad 3/2<x_1<x_2.
\end{equation}
This potential is known as a good model for interaction between dipoles (see \cite[Sec. 6.3]{BetTheta15} for examples). Furthermore, for $\zeta_{2^{-1/3}D_3}$ and $Y$ defined respectively by \eqref{zetaR} and \eqref{defY}, we let, as in \cite{Ennola},
\begin{align}
&\label{G} G(s):=s(s-3)\zeta_{2^{-1/3}D_3}(2s)+12s(s+1)Y(s),\\
&\label{H} H(s):=s(s-1)\zeta_{2^{-1/3}D_3}(2s)-4s(s+1)Y(s).
\end{align}
By \cite[Lemma 11 and Lemma 12]{Ennola}, we know that $G(s)>0$ and $H(s)>0$ for any $s>0$.\\

First of all, we prove the local minimality of $\Z^3$, for some values of the volume $V$, in the classical Lennard-Jones case $a_1=2$, $a_2=1$, $x_1=3$ and $x_2=6$. Second, we prove the local minimality of $D_3$ and $D_3^*$ at high density for any Lennard-Jones-type potential. Finally, we prove the local minimality of the FCC and BCC lattices among all the Bravais lattices, without volume restriction.

\subsection{Local optimality of the simple cubic lattice for the classical Lennard-Jones potential}

Here, we just consider the classical Lennard-Jones potential
$$
f(r^2)=\frac{1}{r^{12}}-\frac{2}{r^6}.
$$

\begin{prop}\label{Z3LJ}
Let $a_1=2$, $a_2=1$, $x_1=3$ and $x_2=6$. Then, there exists $V_1,V_2$ such that, for any $V_1<V<V_2$, $\Z^3$ is a local minimizer of $E_f$ among Bravais lattices of fixed volume $V$, where $V_1\approx 1.200$ and $V_2\approx 1.344$. Furthermore, if $V\not\in [V_1,V_2]$, then $\Z^3$ is a saddle point of $E_f$  among Bravais lattices of fixed volume $V$.
\end{prop}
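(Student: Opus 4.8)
The plan is to use Proposition \ref{2ndZ3}, which gives all the second derivatives of $E_f$ at $\Z^3$ in terms of the lattice sums $\sum_{m,n,p} p^2 f'(CI)$, $\sum_{m,n,p}(p^4-p^2n^2)f''(CI)$ and $\sum_{m,n,p} n^2p^2 f''(CI)$. Since the off-diagonal derivatives vanish except for the $uv$-block, the Hessian at $\Z^3$ is block-diagonal: a $2\times 2$ block in the $(u,v)$ variables, and three equal diagonal entries for $x,y,z$. Studying the sign of the eigenvalues of this Hessian reduces to studying the signs of $\partial_{xx}^2 E_f[\Z^3]$ (which controls the $x,y,z$ directions) together with the trace and determinant of the $(u,v)$ block. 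First I would substitute the classical Lennard-Jones $f(r)=r^{-6}-2r^{-3}$ (so $x_1=3,x_2=6,a_1=2,a_2=1$) into these formulas; because $f'(CI)$ and $f''(CI)$ produce powers $(CI)^{-x_i-1}$ and $(CI)^{-x_i-2}$, the lattice sums collapse into linear combinations of the Epstein-type sums $\sum (p^4-p^2n^2)I^{-x-2}$, $\sum n^2p^2 I^{-x-2}$ and $\sum p^2 I^{-x-1}$ evaluated at $x=3$ and $x=6$, each carrying an explicit power of $V$.

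The key step is to organize the $V$-dependence. Writing $C=2^{1/3}V^{2/3}$ and pulling the powers of $C$ out of each lattice sum, every entry of the Hessian becomes a combination of two explicit terms, one scaling like $V^{-2x_1/3}=V^{-2}$ (the attractive $-2r^{-3}$ part) and one like $V^{-2x_2/3}=V^{-4}$ (the repulsive $r^{-6}$ part), with coefficients that are numerical lattice constants (sums over $\Z^3$ of the quadratic-form weights against powers of $I$). Since these are convergent numerical sums over the cubic lattice, I would evaluate them numerically to high precision. Then each diagonal eigenvalue, and the determinant of the $(u,v)$ block, becomes an explicit function of $V$ of the shape $\alpha V^{-4}-\beta V^{-2}$ or a comparable polynomial in $V^{-2}$, whose sign changes can be located exactly. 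The threshold values $V_1,V_2$ are precisely the volumes where the least favorable eigenvalue changes sign; requiring all eigenvalues (equivalently $\partial_{xx}^2E_f[\Z^3]>0$, the trace of the $(u,v)$ block positive, and its determinant positive) to be simultaneously positive pins down the interval $(V_1,V_2)$, and I expect the binding constraint to be the $x,y,z$-direction eigenvalue, which is positive only on a bounded window. Outside $[V_1,V_2]$ at least one eigenvalue is negative while another remains positive, giving a saddle point.

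The main obstacle will be the numerical/arithmetic bookkeeping: one must identify the correct lattice constants $\sum_{m,n,p} n^2 p^2 I^{-s}$, $\sum_{m,n,p} p^4 I^{-s}$ and $\sum_{m,n,p} p^2 I^{-s}$ (the latter related to $\zeta$-type sums via the cubic symmetry \eqref{scubicsym}, e.g. $\sum p^2 I^{-s}=\tfrac13\sum (m^2+n^2+p^2)I^{-s}$), evaluate them for $s$ corresponding to both $x_1=3$ and $x_2=6$, and verify that the resulting sign conditions are genuinely equivalent to positive definiteness of the full $5\times 5$ Hessian rather than just of individual entries. In particular I would check that the $(u,v)$ block is positive definite throughout $(V_1,V_2)$ so that the binding threshold really comes from the triple eigenvalue $\partial_{xx}^2 E_f[\Z^3]$; the delicate point is confirming that exactly two sign changes occur and that they yield the claimed numerical values $V_1\approx 1.200$ and $V_2\approx 1.344$, so that the local minimality interval is open and its complement consists of saddle points. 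Once the sign analysis of these explicit functions of $V$ is settled, the conclusion about local minimality versus saddle point follows immediately from the spectral decomposition of the Hessian.
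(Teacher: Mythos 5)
Your proposal follows essentially the same route as the paper: substitute the Lennard-Jones potential into Proposition \ref{2ndZ3}, exploit the block-diagonal structure of the Hessian at $\Z^3$ (a $(u,v)$ block plus three equal $x,y,z$ entries), pull out the powers of $C=2^{1/3}V^{2/3}$ so each entry is a two-term combination of numerical lattice constants scaling like $V^{-2x_i/3}$, and locate the sign changes in $V$. One small correction to your expectation: in the paper's computation the lower threshold comes from $\partial_{xx}^2E_f[\Z^3]>0\iff V>V_1$, while the upper threshold $V_2\approx1.344$ is imposed by the determinant of the $(u,v)$ block (positive iff $V<V_2$ or $V>V_4\approx1.580$, with $\partial_{uu}^2E_f[\Z^3]>0\iff V<V_3\approx1.482$), not by the $x,y,z$ eigenvalue being positive on a bounded window --- but the full sign analysis you describe would uncover exactly this.
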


\begin{proof}
We begin by computing
\begin{align*}
&\partial_{uu}^2E_f[\Z^3]=\frac{a_2 x_2}{C^{x_2}}2^{\frac{x_2+1}{3}}h_1(x_2)-\frac{a_1 x_1}{C^{x_1}}2^{\frac{x_1+1}{3}}h_1(x_1),\\
&\partial_{vv}^2E_f[\Z^3]=\frac{2^{3+x_2/3}a_2x_2}{C^{x_2}}h_2(x_2)-\frac{2^{3+x_1/3}a_1x_1}{C^{x_1}}h_2(x_1),\\
&\partial_{xx}^2E_f[\Z^3]=\partial_{yy}^2E_f[\Z^3]=\partial_{zz}^2E_f[\Z^3]=\frac{2^{1+x_2/3}a_2x_2}{C^{x_2}}h_3(x_2)-\frac{2^{1+x_1/3}a_1x_1}{C^{x_1}}h_3(x_1),\\
&\partial_{uv}^2E_f[\Z^3]=\frac{2^{\frac{x_2+2}{3}}a_2x_2}{C^{x_2}}h_4(x_2)-\frac{2^{\frac{x_1+2}{3}}a_1x_1}{C^{x_1}}h_4(x_1),
\end{align*}
where $C=V^{2/3} 2^{1/3}$ and
\begin{align*}
&h_1(x)=3(x+1)\sum_{m,n,p}\frac{p^4-p^2n^2}{(m^2+n^2+p^2)^{x+2}}-4\sum_{m,n,p}\frac{p^2}{(m^2+n^2+p^2)^{x+1}},\\
&h_2(x)=(x+1)\sum_{m,n,p}\frac{p^4-p^2n^2}{(m^2+n^2+p^2)^{x+2}}-\sum_{m,n,p}\frac{p^2}{(m^2+n^2+p^2)^{x+1}},\\
&h_3(x)=2(x+1)\sum_{m,n,p}\frac{p^2n^2}{(m^2+n^2+p^2)^{x+2}}-\sum_{m,n,p}\frac{p^2}{(m^2+n^2+p^2)^{x+1}},\\
&h_4(x)=-2\sum_{m,n,p}\frac{p^4-p^2n^2}{(m^2+n^2+p^2)^{x+2}}+3\sum_{m,n,p}\frac{p^2}{(m^2+n^2+p^2)^{x+1}}.
\end{align*}
Therefore, we easily show that there exists $V_1,V_2$ and $V_3$ such that
\begin{align*}
&\partial_{uu}^2E_f[\Z^3]>0 \iff V<V_3,\\
&\partial_{xx}^2E_f[\Z^3]=\partial_{yy}^2E_f[\Z^3]=\partial_{zz}^2E_f[\Z^3]>0 \iff V>V_1,\\
&\partial_{uu}^2E_f[\Z^3]\partial_{vv}^2E_f[\Z^3]-\left(\partial_{uv}^2E_f[\Z^3]\right)^2>0 \iff V<V_2\quad \textnormal{or}\quad V>V_4,
\end{align*}
where
$$
V_3\approx 1.482,\quad V_1\approx 1.200,\quad V_2\approx 1.344, \quad V_4\approx 1.5797.
$$
\end{proof}

\begin{remark}
We notice that $\Z^3$ is never a local maximum of the classical Lennard-Jones energy, as $\Z^2$ in two dimensions (see \cite{Beterloc}). Furthermore, the same result is true (with different values of $V_1$ and $V_2$) for any Lennard-Jones-type potential given by
$$
f_p(r)=\frac{1}{r^{2p}}-\frac{2}{r^p},\quad p>3/2,
$$
and the interval's length for the local minimality of $\Z^3$ goes to $0$ as $p\to+\infty$, i.e. $|V_2(p)-V_1(p)|\to 0$ as $p\to +\infty$.
\end{remark}

\subsection{Local optimality of the FCC and BCC lattices at high density}

We now study the local optimality of $D_3$ and $D_3^*$ for the Lennard-Jones-type potentials. As in \cite{Betermin:2014fy,BetTheta15,Beterloc} for the triangular lattice, the fact that these lattices are local minimizers for any inverse power law gives their local minimality at high density for the Lennard-Jones-type potentials.

\begin{prop}[Local optimality of FCC and BCC lattices]\label{locoptFCCLJ}
Let $f$ be a Lennard-Jones-type potential defined by \eqref{LJpot}. If we have
\begin{equation}\label{locminFCC}
V<\frac{1}{\sqrt{2}} \min\left\{ \left( \frac{a_2G(x_2)}{a_1G(x_1)} \right)^{\frac{3}{2(x_2-x_1)}}, \left( \frac{a_2H(x_2)}{a_1H(x_1)} \right)^{\frac{3}{2(x_2-x_1)}} \right\},
\end{equation}
then $D_3$ and $D_3^*$ local minimizers of $E_f$ among Bravais lattices of fixed volume $V$. Furthermore, if 
\begin{equation}\label{locmaxFCC}
V> \frac{1}{\sqrt{2}} \max\left\{ \left( \frac{a_2G(x_2)}{a_1G(x_1)} \right)^{\frac{3}{2(x_2-x_1)}}, \left( \frac{a_2H(x_2)}{a_1H(x_1)} \right)^{\frac{3}{2(x_2-x_1)}} \right\},
\end{equation}
then $D_3$ and $D_3^*$ are local maximizers  of $E_f$ among Bravais lattices of fixed volume $V$.
\end{prop}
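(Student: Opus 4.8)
The plan is to exploit the linearity of the Hessian in the potential together with the block structure of Proposition \ref{2ndD3}. Writing $f_s(r)=r^{-s}$, so that $E_{f_s}[L]=\zeta_L(2s)$ and $f=a_2f_{x_2}-a_1f_{x_1}$, the Hessian of $E_f$ at $D_3$ is $a_2$ times the Hessian of $E_{f_{x_2}}$ minus $a_1$ times that of $E_{f_{x_1}}$. First I would substitute $f_s'(CR)=-sC^{-s-1}R^{-s-1}$ and $f_s''(CR)=s(s+1)C^{-s-2}R^{-s-2}$ into the formulas of Proposition \ref{2ndD3} and collapse every lattice sum using \eqref{zetaR} and \eqref{defY}, i.e.\ $\sum R^{-s}=\zeta_{2^{-1/3}D_3}(2s)$ and $\sum TR^{-s-2}=Y(s)$. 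A direct computation then rewrites each entry of the pure-power Hessian in terms of $G$ and $H$ from \eqref{G}--\eqref{H}:
\begin{align*}
&\partial_{uu}^2E_{f_s}[D_3]=\tfrac{C^{-s}}{2}G(s),\quad \partial_{uv}^2E_{f_s}[D_3]=-\tfrac{C^{-s}}{2}G(s),\quad \partial_{vv}^2E_{f_s}[D_3]=\tfrac{C^{-s}}{2}G(s)+\tfrac{C^{-s}}{3}H(s),\\
&\partial_{xx}^2E_{f_s}[D_3]=\tfrac{C^{-s}}{6}\bigl(G(s)+H(s)\bigr),\quad \partial_{xy}^2E_{f_s}[D_3]=-\tfrac{C^{-s}}{3}H(s),\quad \partial_{yy}^2E_{f_s}[D_3]=\partial_{zz}^2E_{f_s}[D_3]=\tfrac{2C^{-s}}{3}H(s).
\end{align*}

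Proposition \ref{2ndD3} shows that the $5\times5$ Hessian is block-diagonal in the groups $\{u,v\}$, $\{x,y\}$ and $\{z\}$, since all remaining mixed derivatives vanish. Hence local minimality (resp.\ maximality) of $D_3$ is equivalent to positive (resp.\ negative) definiteness of these three blocks. Setting
$$
\mathcal{G}:=a_2C^{-x_2}G(x_2)-a_1C^{-x_1}G(x_1),\qquad \mathcal{H}:=a_2C^{-x_2}H(x_2)-a_1C^{-x_1}H(x_1),
$$
linearity turns each block into $\mathcal{G}A+\mathcal{H}B$ for fixed positive-semidefinite matrices $A,B$. I would then read off the leading minors: the $(z)$-block equals $\tfrac{2}{3}\mathcal{H}$, the $(u,v)$-block has upper-left entry $\tfrac{\mathcal{G}}{2}$ and determinant $\tfrac{\mathcal{G}\mathcal{H}}{6}$, and the $(x,y)$-block has upper-left entry $\tfrac{\mathcal{G}+\mathcal{H}}{6}$ and determinant $\tfrac{\mathcal{G}\mathcal{H}}{9}$. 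Since $\mathcal{G}\mathcal{H}>0$ forces $\mathcal{G}$ and $\mathcal{H}$ to share a sign, every criterion collapses to a single dichotomy: all three blocks are positive definite iff $\mathcal{G}>0$ and $\mathcal{H}>0$, and all three are negative definite iff $\mathcal{G}<0$ and $\mathcal{H}<0$.

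It remains to translate these two sign conditions into bounds on $V$. Using $C=2^{1/3}V^{2/3}$ together with the positivity $G(s),H(s)>0$ recalled after \eqref{H}, the inequality $\mathcal{G}>0$ is equivalent to $C^{x_2-x_1}<a_2G(x_2)/(a_1G(x_1))$, that is $V<\tfrac{1}{\sqrt2}\bigl(a_2G(x_2)/(a_1G(x_1))\bigr)^{3/(2(x_2-x_1))}$, and likewise for $\mathcal{H}>0$ with $H$ in place of $G$. Requiring both gives exactly \eqref{locminFCC} and yields the local minimality of $D_3$; reversing all inequalities gives \eqref{locmaxFCC} and local maximality. This settles the FCC lattice.

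For $D_3^*$ I would argue by duality: by Poisson summation the energy $E_{f_s}[L^*]$ is, up to the explicit constants of the functional equation \eqref{fcteqzeta} and the fixed factor $|L|^{-1}$ (constant at fixed volume), again an inverse-power energy on $L$ with exponent $3/2-s$, so $E_f$ at $D_3^*$ corresponds to a Lennard-Jones-type energy at $D_3$, and the same block analysis applies. The routine part is the algebra collapsing the sums into $G$ and $H$; the two delicate points, where I expect the real work, are the following. First, the pure-power entries must assemble into precisely the combinations $\tfrac{C^{-s}}{2}G$, $\tfrac{C^{-s}}{3}H$ displayed above: it is exactly the matching of these coefficients that makes all three blocks share the \emph{same} pair $(\mathcal{G},\mathcal{H})$ and hence produces the clean final criterion, so the bookkeeping must be done carefully. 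Second, making the BCC duality rigorous requires the analytic continuation of $\zeta_L$ (and of $G,H$, together with their positivity) to the values $s=3/2-x_i<0$, and one must check that the functional equation preserves both the sign structure and the exact threshold so that $D_3^*$ inherits the identical bound as $D_3$.
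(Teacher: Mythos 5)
Your argument is correct, and it reaches the conclusion by a genuinely different (more self-contained) route than the paper. The paper's proof is essentially a one-step application of \cite[Lemma 4]{Ennola}, which already packages the second-order expansion
$E_f[L]-E_f[D_3]=\frac{1}{12}\mathcal{G}\,Q_1(u,v,x)+\frac{1}{12}\mathcal{H}\,Q_2(v,x,y,z)+\Delta$
with $Q_1=3(u-v)^2+x^2$, $Q_2=2(v-1)^2+(2z-1)^2+(x-2y+1)^2$, $\Delta=o(\gamma^2)$, and with your $\mathcal{G}=a_2C^{-x_2}G(x_2)-a_1C^{-x_1}G(x_1)$, $\mathcal{H}=a_2C^{-x_2}H(x_2)-a_1C^{-x_1}H(x_1)$; the sign conditions and the translation into the volume bounds \eqref{locminFCC}--\eqref{locmaxFCC} are then immediate. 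You instead rederive this expansion from the paper's own Proposition \ref{2ndD3}: I checked your pure-power Hessian entries by substituting $f_s'(CR)=-sC^{-s-1}R^{-s-1}$ and $f_s''(CR)=s(s+1)C^{-s-2}R^{-s-2}$ and collapsing via \eqref{zetaR} and \eqref{defY}, and all six are correct (e.g.\ $\partial_{uu}^2=C^{-s}[\tfrac{s(s-3)}{2}\zeta_{2^{-1/3}D_3}(2s)+6s(s+1)Y(s)]=\tfrac{C^{-s}}{2}G(s)$, and $\partial_{vv}^2$ matches since $\tfrac{1}{2}G+\tfrac{1}{3}H=\tfrac{s(5s-11)}{6}\zeta+\tfrac{14}{3}s(s+1)Y$); moreover the quadratic form of your block Hessian is identically $\frac{\mathcal{G}}{12}Q_1+\frac{\mathcal{H}}{12}Q_2$, so your minors ($\det$ of the $(u,v)$-block $=\mathcal{G}\mathcal{H}/6$, of the $(x,y)$-block $=\mathcal{G}\mathcal{H}/9$, $z$-block $=\tfrac{2}{3}\mathcal{H}$) and the dichotomy $\mathcal{G},\mathcal{H}>0$ versus $\mathcal{G},\mathcal{H}<0$ reproduce exactly Ennola's decomposition, and the final volume bounds follow from $C=2^{1/3}V^{2/3}$ as you state. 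What your route buys is transparency: it shows the proposition follows from Proposition \ref{2ndD3} plus the criticality of Proposition \ref{criticality} alone, with no external input beyond the positivity of $G$ and $H$ (\cite[Lemmas 11, 12]{Ennola}). What the paper's citation buys is the explicit remainder control $\Delta=o(\gamma^2)$ for free; in your version the second-derivative test supplies this, but you should then say definiteness of the blocks is \emph{sufficient} for strict local optimality at the critical point, not ``equivalent'' to it --- harmless here, since the strict inequalities \eqref{locminFCC} and \eqref{locmaxFCC} give strict signs of $\mathcal{G}$ and $\mathcal{H}$.

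On the BCC side your treatment coincides with the paper's, which disposes of $D_3^*$ in a single sentence invoking the functional equation \eqref{fcteqzeta}. You are right --- and in fact more candid than the paper --- that this step silently requires the analytic continuation of $\zeta_L$ (hence of $G$, $H$) to the dual exponents $3/2-x_i<0$, together with a check that the sign factors $\Gamma(3/2-s)/\Gamma(s)$ and the volume normalization conspire to give the \emph{same} threshold for $D_3^*$ as for $D_3$; the paper relies on the remark that the theorem persists for $0<x_1<x_2$ by continuation, and your proposal is no weaker than the paper on this point.
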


\begin{proof}
We apply directly \cite[Lemma 4]{Ennola} and we get
$$
E_f[L]-E_f[D_3]=\frac{1}{12}\left( \frac{a_2G(x_2)}{C^{x_2}}-\frac{a_1G(x_1)}{C^{x_1}} \right)Q_1(u,v,x) +\frac{1}{12}\left( \frac{a_2H(x_2)}{C^{x_2}}-\frac{a_1H(x_1)}{C^{x_1}} \right)Q_2(v,x,y,z)+\Delta,
$$
where $Q_1(u,v,x)=3(u-v)^2+x^2$, $Q_2(v,x,y,z)=2(v-1)^2+(2z-1)^2+(x-2y+1)^2$ and 
$$
\lim_{\gamma\to 0}\frac{\Delta}{\gamma^2}=0
$$
for $\gamma=\max\left\{ |u-1|, |v-1|, |y-1/2|, |z-1/2| \right\}$.\\
Therefore, as $G(s)>0$ and $H(s)>0$ for any $s>0$, \eqref{locminFCC} is equivalent with $\frac{a_2G(x_2)}{C^{x_2}}-\frac{a_1G(x_1)}{C^{x_1}}>0$ and $\frac{a_2H(x_2)}{C^{x_2}}-\frac{a_1H(x_1)}{C^{x_1}}>0$, and then $D_3$ is a local minimizer of $E_f$ among Bravais lattices of fixed volume $V$ satisfying \eqref{locminFCC}. In the second case, \eqref{locmaxFCC} implies that the quantities are negative and $D_3$ is a local maximizer of $E_f$.\\
Using functional equation \eqref{fcteqzeta}, we easily find the same result for the local optimality of the $BCC$ lattice. 
\end{proof}
%
%

\begin{example}\label{exampleclassic}
If $a_1=2$, $a_2=1$, $x_1=3$ and $x_2=6$, then, by Proposition \ref{locoptFCCLJ}, we get that:
\begin{enumerate}
\item if $0<V<V_{\min}$, $V_{\min}\approx 1.091$, then $D_3$ and $D_3^*$ are local minimizers of $E_f$ among Bravais lattices of fixed volume $V$;
\item if $V_{\min}<V<V_{\max}$, $V_{\max}\approx 1.313$, then $D_3$ and $D_3^*$ are saddle points of $E_f$ among Bravais lattices of fixed volume $V$;
\item if $V>V_{\max}$, then $D_3$ and $D_3^*$ are local maximizers of $E_f$ among Bravais lattices of fixed volume $V$.
\end{enumerate}
Furthermore, we compute the optimal volume $V^*(L)$ minimizing $V\mapsto E_f[V^{1/3}L]$ for $L\in \{\Z^3, D_3,D_3^*\}$. Let $\lambda=V^{1/3}$ and $g(\lambda):=E_f[\lambda L]=\zeta_L(12)\lambda^{-12}-2\zeta_L(6)\lambda^{-6}$. Then we have
\begin{equation}
g'(\lambda)\geq 0 \iff \lambda\geq \left( \frac{\zeta_L(12)}{\zeta_L(6)} \right)^{1/6} \iff V\geq \sqrt{\frac{\zeta_L(12)}{\zeta_L(6)}}=:V^*(L).
\end{equation}
\end{example}

We then get the following result:
\begin{prop}\label{prop-locglobalLJ}
Let $f$ be defined by \eqref{LJpot} with $a_1=2$, $a_2=1$, $x_1=3$ and $x_2=6$. Let $V^*(\Z^3)^{1/3}\Z^3$ (resp. $V^*(D_3)^{1/3}D_3$ and $V^*(D_3^*)^{1/3}D_3^*$) be the minimizer of $E_f$ among all the dilated of $\Z^3$ (resp. $D_3$ and $D_3^*$) with volume $V^*(\Z^3)$ (resp. $V^*(D_3)$ and $V^*(D_3^*)$). Then we have:
\begin{enumerate}
\item  $V^*(\Z^3)^{1/3}\Z^3$ is not a local minimizer of $E_f$ among all the Bravais lattices;
\item $V^*(D_3)^{1/3}D_3$ and $V^*(D_3^*)^{1/3}D_3^*$ are local minimizers of $E_f$ among all the Bravais lattices.
\end{enumerate}
\end{prop}
\begin{proof}
It is sufficient to compute $V^*(L)$ for $L\in \{\Z^3, D_3,D_3^*\}$:
\begin{itemize}
\item[-] $V^*(\Z^3)\approx 0.859 \not\in [V_1,V_2]$, where $V_1$ and $V_2$ are given by Proposition \ref{Z3LJ}. Therefore, by Proposition \ref{Z3LJ}, $V^*(\Z^3)^{1/3}\Z^3$ is not a local minimizer of $E_f$ among all the Bravais lattices.
\item[-] $V^*(D_3)\approx 0.648<V_{\min}\approx 1.091$, therefore by Proposition \ref{locoptFCCLJ}, $V^*(D_3)^{1/3}D_3$ is a local minimizer of $E_f$ among all the Bravais lattices;
\item[-] $V^*(D_3^*)\approx 0.664<V_{\min}$, therefore by Proposition \ref{locoptFCCLJ}, $V^*(D_3^*)^{1/3}D_3^*$ is a local minimizer of $E_f$ among all the Bravais lattices.
\end{itemize}
\end{proof}


Thus, from this result and Misra's work \citep{misra1940}, the good candidate for the global minimization of $E_f$ with a Lennard-Jones-type interacting potential $f$ and $x_2>x_1>3/2$ is the FCC lattice $V^*(D_3)^{1/3}D_3$. If $0<x_1<x_2<3/2$, we expect, from Sarnak-Str\"ombergsson conjecture \cite{SarStromb}, that the BCC lattice $V^*(D_3^*)^{1/3}D_3^*$ is the good candidate.

\section{Appendix: General formulas for the second derivatives of $E_f$}\label{secAppendix}

We give the formulas for the second derivatives of $E_f$, for any $f$ and at any Bravais lattice $L=(u,v,x,y)$ with volume $V$.

\begin{lemma}[Second derivatives of the energy]\label{2nd} For any $f\in \mathcal{F}$, any fixed volume $V>0$ and any $L=(u,v,x,y)$, we have
\begin{align*}
&\partial^2_{uu}E_f[L]=C^2\sum_{m,n,p}\left(-\frac{(m+xn+yp)^2}{u^2}-\frac{v^2(n+zp)^2}{u^2}+\frac{u}{v^2}p^2  \right)^2 f''(Q_L)\\
&\quad\quad\quad\quad\quad\quad  +2C\sum_{m,n,p}\left(  \frac{(m+xn+yp)^2}{u^3}+\frac{v^2(n+zp)^2}{u^3}+\frac{1}{2v^2}p^2 \right) f'(Q_L),\\
&\partial^2_{vv}E_f[L]=C^2\sum_{m,n,p}\left(\frac{2v(n+zp)^2}{u}-\frac{u^2}{v^3}p^2 \right)^2 f''(Q_L) +C\sum_{m,n,p}\left(  \frac{2(n+zp)^2}{u}+\frac{3u^2}{v^4}p^2 \right) f'(Q_L),\\
&\partial_{xx}^2 E_f[L]=\frac{4C^2}{u^2}\sum_{m,n,p}n^2(m+xn+yp)^2f''(Q_L)+\frac{2C}{u}\sum_{m,n,p}n^2 f'(Q_L),\\
&\partial_{yy}^2 E_f[L]=\frac{4C^2}{u^2}\sum_{m,n,p}p^2(m+xn+yp)^2f''(Q_L)+\frac{2C}{u}\sum_{m,n,p}p^2 f'(Q_L),\\
&\partial_{zz}^2 E_f[L]=\frac{4C^2v^4}{u^2}\sum_{m,n,p}p^2(n+zp)^2f''(Q_L)+\frac{2Cv^2}{u}\sum_{m,n,p}p^2 f'(Q_L),\\
&\partial_{uv}^2 E_f[L]=C^2\sum_{m,n,p}\left(-\frac{(m+xn+yp)^2}{u^2}-\frac{v^2(n+zp)^2}{u^2}+\frac{u}{v^2}p^2   \right)\left( \frac{2v(n+zp)^2}{u}-\frac{u^2}{v^3}p^2 \right)f''(Q_L)\\
&\quad\quad\quad\quad\quad\quad  -2C\sum_{m,n,p}\left( \frac{v(n+zp)^2}{u^2}+\frac{u}{v^3}p^2 \right)f'(Q_L),\\
&\partial_{ux}^2 E_f[L]=\frac{2C^2}{u}\sum_{m,n,p}n(m+xn+yp)\left(-\frac{(m+xn+yp)^2}{u^2}-\frac{v^2(n+zp)^2}{u^2}+\frac{u}{v^2}p^2    \right)f''(Q_L)\\
&\quad\quad\quad\quad\quad\quad -\frac{2C}{u^2}\sum_{m,n,p}n(m+xn+yp)f'(Q_L),\\
&\partial_{uy}^2 E_f[L]=\frac{2C^2}{u}\sum_{m,n,p}p(m+xn+yp)\left(-\frac{(m+xn+yp)^2}{u^2}-\frac{v^2(n+zp)^2}{u^2}+\frac{u}{v^2}p^2    \right)f''(Q_L)\\
&\quad\quad\quad\quad\quad\quad -\frac{2C}{u^2}\sum_{m,n,p}p(m+xn+yp)f'(Q_L),\\
&\partial_{uz}^2 E_f[L]=\frac{2C^2v^2}{u}\sum_{m,n,p}p(n+zp)\left(-\frac{(m+xn+yp)^2}{u^2}-\frac{v^2(n+zp)^2}{u^2}+\frac{u}{v^2}p^2    \right)f''(Q_L)\\
&\quad\quad\quad\quad\quad\quad -\frac{2Cv^2}{u^2}\sum_{m,n,p}p(n+zp)f'(Q_L),\\
&\partial_{vx}^2 E_f[L]=\frac{2C^2}{u}\sum_{m,n,p}n(m+xn+yp)\left( \frac{2v(n+zp)^2}{u}-\frac{u^2}{v^3}p^2 \right)f''(Q_L),\\
&\partial_{vy}^2 E_f[L]=\frac{2C^2}{u}\sum_{m,n,p}p(m+xn+yp)\left( \frac{2v(n+zp)^2}{u}-\frac{u^2}{v^3}p^2 \right)f''(Q_L),\\
&\partial_{vz}^2 E_f[L]=\frac{2C^2v^2}{u}\sum_{m,n,p}p(n+zp)\left( \frac{2v(n+zp)^2}{u}-\frac{u^2}{v^3}p^2 \right)f''(Q_L)+\frac{4Cv}{u}\sum_{m,n,p}p(n+zp)f'(Q_L),\\
&\partial_{xy}^2 E_f[L]=\frac{4C^2}{u^2}\sum_{m,n,p}np(m+xn+yp)^2f''(Q_L)+\frac{2C}{u}\sum_{m,n,p}npf'(Q_L),\\
&\partial_{xz}^2 E_f[L]=\frac{4C^2v^2}{u^2}\sum_{m,n,p}np(n+zp)(m+xn+yp)f''(Q_L),\\
&\partial_{yz}^2 E_f[L]=\frac{4C^2v^2}{u^2}\sum_{m,n,p}p^2(n+zp)(m+xn+yp)f''(Q_L).
\end{align*}
\end{lemma}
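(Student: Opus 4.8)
The plan is to treat $E_f[L]=\sum_{m,n,p}f(Q_L)$ as a composition and differentiate twice under the summation sign. For any two parameters $a,b\in\{u,v,x,y,z\}$ the chain rule gives
\begin{equation*}
\partial^2_{ab}E_f[L]=\sum_{m,n,p}\Big[f''(Q_L)\,(\partial_a Q_L)(\partial_b Q_L)+f'(Q_L)\,\partial^2_{ab}Q_L\Big],
\end{equation*}
so every formula in the statement is determined by the first and second partial derivatives of the single scalar function $Q_L$. The first partials are already in hand: comparing with the preceding lemma on first derivatives, the bracketed coefficients of $f'(Q_L)$ appearing there are precisely $\partial_u Q_L,\dots,\partial_z Q_L$. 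Thus the whole computation reduces to differentiating those explicit expressions one more time.

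To organize this I would set $A=m+xn+yp$ and $B=n+zp$, so that
\begin{equation*}
Q_L=\frac{C}{u}\big(A^2+v^2B^2\big)+\frac{Cu^2}{2v^2}p^2,
\end{equation*}
with $\partial_x A=n$, $\partial_y A=p$, $\partial_z B=p$, and all dependence on $u,v$ carried by explicit rational prefactors. Differentiating the five first partials is then elementary. For the diagonal entries this produces $\partial^2_{uu}Q_L$, $\partial^2_{vv}Q_L$ and the three quantities $\partial^2_{xx}Q_L=\tfrac{2C}{u}n^2$, $\partial^2_{yy}Q_L=\tfrac{2C}{u}p^2$, $\partial^2_{zz}Q_L=\tfrac{2Cv^2}{u}p^2$, which are exactly the $f'(Q_L)$-coefficients asserted. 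The mixed partials $\partial^2_{uv}Q_L,\partial^2_{ux}Q_L,\partial^2_{uy}Q_L,\partial^2_{uz}Q_L,\partial^2_{vz}Q_L$ are obtained the same way, whereas $\partial^2_{vx}Q_L=\partial^2_{vy}Q_L=\partial^2_{xz}Q_L=\partial^2_{yz}Q_L=0$, since $\partial_x Q_L$ and $\partial_y Q_L$ carry no $v$-dependence while $A$ carries no $z$-dependence; this explains why those four entries contain only an $f''(Q_L)$-term.

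Substituting the two families of expressions into the chain-rule identity and collecting the $f''(Q_L)$ and $f'(Q_L)$ pieces reproduces each line of the lemma; for instance $(\partial_x Q_L)(\partial_z Q_L)=\tfrac{4C^2v^2}{u^2}\,npAB$ together with $\partial^2_{xz}Q_L=0$ gives the stated $\partial^2_{xz}E_f[L]$. The only genuinely analytic point, and the step I would treat most carefully, is the justification of differentiation under the infinite sum: since $Q_L\asymp|p|^2$ on the lattice while the defining property of $\mathcal{F}$ forces $f^{(k)}$ to decay like $r^{-3/2-k-\eta_k}$, the $f'(Q_L)$-terms behave like $|p|^{-3-2\eta_1}$ after multiplying by the degree-two factors $\partial^2_{ab}Q_L$, and the $f''(Q_L)$-terms behave like $|p|^{-3-2\eta_2}$ after multiplying by the degree-four products $(\partial_a Q_L)(\partial_b Q_L)$. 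Hence the differentiated series converge absolutely and locally uniformly in $(u,v,x,y,z)$, which legitimizes the term-by-term differentiation and, with it, all the formulas above.
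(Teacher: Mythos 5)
Your proposal is correct and is precisely the argument the paper intends: the paper states Lemma \ref{2nd} without proof (treating it as a straightforward computation), and your chain-rule identity $\partial^2_{ab}E_f=\sum_{m,n,p}\left[f''(Q_L)(\partial_aQ_L)(\partial_bQ_L)+f'(Q_L)\partial^2_{ab}Q_L\right]$ together with the first and second partials of $Q_L$ (including the vanishing of $\partial^2_{vx}Q_L$, $\partial^2_{vy}Q_L$, $\partial^2_{xz}Q_L$, $\partial^2_{yz}Q_L$, which explains the four entries with no $f'$-term) reproduces every line of the lemma. Your justification of term-by-term differentiation, via $Q_L\asymp m^2+n^2+p^2$ locally uniformly in $(u,v,x,y,z)$ with $u,v>0$ and the decay exponents $-3/2-k-\eta_k$ in the definition of $\mathcal{F}$ yielding absolutely and locally uniformly convergent differentiated series of order $|(m,n,p)|^{-3-2\eta_k}$, is correct and makes explicit the one analytic point the paper leaves implicit.
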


\paragraph{Acknowledgements.} I'm grateful for the support of MAThematics Center Heidelberg (MATCH). I also wish to express my thanks to the anonymous referee for her/his suggestions.

\bibliographystyle{plain}
\bibliography{locmin3d}

\begin{thebibliography}{10}

\bibitem{baernstein}
A.~Baernstein~II.
\newblock A minimum problem for heat kernels of flat tori.
\newblock {\em Contemporary Mathematics}, 201:227--243, 1997.

\bibitem{Bernstein}
S.~Bernstein.
\newblock {Sur les fonctions absolument monotones}.
\newblock {\em Acta Math.}, 52:1--66, 1929.

\bibitem{Beterloc}
L.~B{\'e}termin.
\newblock Local variational study of 2d lattice energies and application to
  lennard-jones type interactions.
\newblock \textit{Preprint. arXiv:1611.07820}, 2016.

\bibitem{BetTheta15}
L.~B{\'e}termin.
\newblock {Two-dimensional Theta Functions and Crystallization among Bravais
  Lattices}.
\newblock {\em SIAM J. Math. Anal.}, 48(5):3236--3269, 2016.

\bibitem{BeterminPetrache}
L.~B{\'e}termin and M.~Petrache.
\newblock Dimension reduction techniques for the minimization of theta
  functions on lattices.
\newblock {\em J. Math. Phys.}, 58:071902, 2017.

\bibitem{Betermin:2014fy}
L.~B{\'e}termin and P.~Zhang.
\newblock {Minimization of energy per particle among Bravais lattices in
  $\R^2$: Lennard-Jones and Thomas-Fermi cases}.
\newblock {\em {Commun. Contemp. Math.}}, 17(6):1450049, 2015.

\bibitem{Blanc:2015yu}
X.~Blanc and M.~Lewin.
\newblock {The Crystallization Conjecture: A Review}.
\newblock {\em EMS Surveys in Mathematical Sciences}, 2:255--306, 2015.

\bibitem{born1940}
M.~Born.
\newblock On the stability of crystal lattices. i.
\newblock {\em Mathematical Proceedings of the Cambridge Philosophical
  Society}, 36(2):160--172, 1940.

\bibitem{Cassels}
J.W.S. Cassels.
\newblock {On a Problem of Rankin about the Epstein Zeta-Function}.
\newblock {\em Proceedings of the Glasgow Mathematical Association}, 4:73--80,
  7 1959.

\bibitem{CazorlaBoronatQC}
C.~Cazorla and J.~Boronat.
\newblock Simulation and understanding of atomic and molecular quantum
  crystals.
\newblock {\em Reviews of Modern Physics}, 89:035003, 2017.

\bibitem{CohnKumar}
H.~Cohn and A.~Kumar.
\newblock {Universally optimal distribution of points on spheres}.
\newblock {\em J. Amer. Math. Soc.}, 20(1):99--148, 2007.

\bibitem{Diananda}
P.~H. Diananda.
\newblock {Notes on Two Lemmas concerning the Epstein Zeta-Function}.
\newblock {\em Proceedings of the Glasgow Mathematical Association},
  6:202--204, 7 1964.

\bibitem{Eno2}
V.~Ennola.
\newblock {A Lemma about the Epstein Zeta-Function}.
\newblock {\em Proceedings of The Glasgow Mathematical Association},
  6:198--201, 1964.

\bibitem{Ennola}
V.~Ennola.
\newblock {On a Problem about the Epstein Zeta-Function}.
\newblock {\em Mathematical Proceedings of The Cambridge Philosophical
  Society}, 60:855--875, 1964.

\bibitem{Faulhuber:2016aa}
M.~Faulhuber and S.~Steinerberger.
\newblock Optimal gabor frame bounds for separable lattices and estimates for
  jacobi theta functions.
\newblock Preprint. Arxiv:1601.02972, 01 2016.

\bibitem{TheilFlatley}
L.~Flatley and F.~Theil.
\newblock {Face-Centred Cubic Crystallization of Atomistic Configurations}.
\newblock {\em Archive for Rational Mechanics and Analysis}, 219(1):363--416,
  2015.

\bibitem{Rad2}
R.~C. Heitmann and C.~Radin.
\newblock {The Ground State for Sticky Disks}.
\newblock {\em Journal of Statistical Physics}, 22:281--287, 1980.

\bibitem{misra1940}
R.~D. Misra.
\newblock On the stability of crystal lattices. ii.
\newblock {\em Mathematical Proceedings of the Cambridge Philosophical
  Society}, 36(2):173--182, 004 1940.

\bibitem{Mont}
H.~L. Montgomery.
\newblock {Minimal Theta Functions}.
\newblock {\em Glasgow Mathematical Journal}, 30(1):75--85, 1988.

\bibitem{Rad3}
C.~Radin.
\newblock {The Ground State for Soft Disks}.
\newblock {\em Journal of Statistical Physics}, 26(2):365--373, 1981.

\bibitem{Rankin}
R.~A. Rankin.
\newblock {A Minimum Problem for the Epstein Zeta-Function}.
\newblock {\em Proceedings of The Glasgow Mathematical Association},
  1:149--158, 1953.

\bibitem{SarStromb}
P.~Sarnak and A.~Str{\"o}mbergsson.
\newblock {Minima of Epstein's Zeta Function and Heights of Flat Tori}.
\newblock {\em Inventiones Mathematicae}, 165:115--151, 2006.

\bibitem{Suto1}
A.~S{\"u}to.
\newblock {Crystalline Ground States for Classical Particles}.
\newblock {\em Physical Review Letters}, 95, 2005.

\bibitem{Suto2}
A.~S{\"u}to.
\newblock {Ground State at High Density}.
\newblock {\em Communications in Mathematical Physics}, 305:657--710, 2011.

\bibitem{Crystal}
F.~Theil.
\newblock {A Proof of Crystallization in Two Dimensions}.
\newblock {\em Communications in Mathematical Physics}, 262(1):209--236, 2006.

\end{thebibliography}
\end{document}